\newtheorem{thm}{Theorem}
\newtheorem{lem}[thm]{Lemma}
\newtheorem{prop}[thm]{Proposition}
\newtheorem{example}[thm]{Example}
\theoremstyle{definition}
\newtheorem{defn}{Definition}
\theoremstyle{remark}
\newtheorem{rem}{Remark}
\title{Improving the Lower Bound for the Union-closed Sets 
Conjecture via Conditionally IID Coupling}
\author{%
  \IEEEauthorblockN{Jingbo Liu}\\
  \IEEEauthorblockA{Department of Statistics,
   University of Illinois, Urbana-Champaign\\
                                    Email: jingbol@illinois.edu}
}
\date{December 2021}
\date{May 2023}
\begin{document}
\maketitle
\begin{abstract}
Recently, Gilmer proved the first constant lower bound for the union-closed sets conjecture via an information-theoretic argument.
The heart of the argument is an entropic inequality involving the OR function of two i.i.d.\ binary vectors, and the best constant obtainable through the i.i.d.\ coupling is $\frac{3-\sqrt{5}}{2}\approx0.38197$.
Sawin demonstrated that the bound can be strictly improved by considering a convex combination of the i.i.d.\ coupling and the max-entropy coupling, and the best constant obtainable through this approach is around 0.38234, as evaluated by Yu and Cambie. 
In this work we show analytically that the bound can be further strictly improved by considering another class of coupling under which the two binary sequences are i.i.d.\ conditioned on an auxiliary random variable.
We also provide a new class of bounds in terms of finite-dimensional optimization.
For a basic instance from this class, analysis assisted with numerically solved 9-dimensional optimization suggests that the optimizer assumes a certain structure.
Under numerically verified hypotheses, the lower bound for the union-closed sets conjecture can be improved to approximately 0.38271, a number that can be defined as the solution to an analytic equation.
\end{abstract}

\section{Introduction}

The union-closed sets conjecture, usually credited to Frankl, is a well-known open problem in combinatorics
which has a simple statement: 
for any nonempty union-closed family of subsets of $[n]:=\{1,2,\dots,n\}$, there exists $i\in[n]$ that belongs to at least half of these subsets \cite{bruhn2015journey}.
Combinatorial arguments of Knill \cite{knill1994graph}
and W\'ojick\cite{wojcik1999union}
proved weaker versions of the conjecture where the proportion one half is replaced by a lower bound depending on the size of the family.
The first constant lower bound was achieved recently by Gilmer \cite{gilmer2022constant} through an information theoretic argument. 
Let $\bar{s}:=1-s$ for any $s\in[0,1]$, and let $h(s):=s\log_2\frac1{s}+\bar{s}\log_2\frac1{\bar{s}}$
denote the binary entropy function.
The crux of argument is the following information-theoretic inequality:
\begin{prop}\label{prop1}
Let $S$ be a random variable in $[0,1]$ satisfying $\mathbb{E}[S]=u$, and let $T$ be an i.i.d.\ copy of $S$. Then
\begin{align}
\mathbb{E}[h(\bar{S}\bar{T})]
\ge \mathbb{E}[h(S)]\cdot\left\{
\begin{array}{cc}
\frac{h(2u-u^2)}{h(u)},
\quad 
u\le \frac{3-\sqrt{5}}{2}
\\
(1-u)\frac{2}{\sqrt{5}-1},
\quad u\ge \frac{3-\sqrt{5}}{2}
\end{array}
\right..
\label{e1}
\end{align}
\end{prop}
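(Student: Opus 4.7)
The plan is to treat the two regimes separately, since they correspond to different extremal distributions of $S$. In Regime~1 ($u \le (3-\sqrt{5})/2$) the extremum is achieved by the degenerate distribution $S = u$ almost surely; in Regime~2 ($u \ge (3-\sqrt{5})/2$) it is attained on a two-point distribution supported at $\{(3-\sqrt{5})/2,\ 1\}$. The critical value $u^* := (3-\sqrt{5})/2$ is the positive fixed point of $u \mapsto (1-u)^2$, so $h(2u^* - u^{*2}) = h(u^*)$ and both pieces of the bound evaluate to $1$ there, guaranteeing continuity at the interface.

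For Regime~1 I would first verify that the bound is tight at $S = u$ a.s.\ (both sides become $h(2u-u^2)$ and $h(u)$ respectively), and then establish a tangent-plane pointwise inequality
\[
h(\bar{s}\bar{t}) \;\ge\; \alpha(u)\bigl[h(s) + h(t)\bigr] + \beta(u)\bigl[(s - u) + (t - u)\bigr], \qquad s, t \in [0, 1],
\]
with $\alpha(u) := h(2u - u^2)/(2h(u))$ and $\beta(u)$ chosen so that both value-equality and first-derivative equality hold at $s = t = u$. Taking expectation under $\mathbb{E}[S] = \mathbb{E}[T] = u$ annihilates the $\beta$-term and yields the bound directly.

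For Regime~2 the two-point extremizer exploits the identity $h((\sqrt{5}-1)/2) = h((3-\sqrt{5})/2)$, which follows from $\phi + \phi^2 = 1$ for $\phi := (\sqrt{5}-1)/2$. I would first reduce the infimum problem to distributions of finite support by a Lagrangian stationarity argument (the functional derivative being constant on the support forces the support to consist of only a few points), then show that among $2$-point distributions with mean $u$, the ratio $\mathbb{E}[h(\bar{S}\bar{T})]/\mathbb{E}[h(S)]$ is minimized when the support is $\{u^*, 1\}$. A direct computation on this distribution, using the weights $\bar{u}/\phi$ at $u^*$ and $1 - \bar{u}/\phi$ at $1$, produces the factor $(1-u)\cdot 2/(\sqrt{5}-1)$ exactly.

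The main obstacle is verifying the pointwise inequality in Regime~1. After fixing $t$, one must show that
\[
s \;\mapsto\; h(\bar{s}\bar{t}) - \alpha(u)[h(s) + h(t)] - \beta(u)[(s - u) + (t - u)]
\]
is nonnegative on $[0, 1]$, with equality only at $s = t = u$. Although the construction enforces value and first-derivative tangency there, global verification requires controlling the second derivative in $s$, classifying possible interior critical points (given by a transcendental equation in $s, t$), and explicitly checking boundary values at $s \in \{0, 1\}$ where $h'$ diverges. A substitution such as $(x, y) = (\bar{s}\bar{t},\ s + t)$ may help reduce the problem to a more tractable one-variable optimization.
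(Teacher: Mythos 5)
Note first that the paper itself contains no proof of Proposition~\ref{prop1}: it is quoted from the literature (Chase--Lovett, Alweiss--Huang--Sellke, Sawin), so your proposal has to stand on its own, and in its present form the Regime~1 step is not just ``hard to verify'' --- it is false. Your tangent-plane bound $h(\bar{s}\bar{t})\ge \alpha(u)[h(s)+h(t)]+\beta(u)[(s-u)+(t-u)]$ is a pointwise statement; integrating it uses only the marginal means and never the independence of $S$ and $T$. If it were true, it would give $\mathbb{E}[h(\bar{S}\bar{T})]\ge \frac{h(2u-u^2)}{h(u)}\mathbb{E}[h(S)]$ for \emph{every} coupling of two mean-$u$ variables, and that conclusion fails: take $S=T$ with $\mathbb{P}[S=1/2]=2u$, $\mathbb{P}[S=0]=1-2u$; then $\mathbb{E}[h(\bar{S}\bar{T})]=2u\,h(1/4)\approx 0.811\cdot 2u$ while $\mathbb{E}[h(S)]=2u$ and $h(2u-u^2)/h(u)\ge 1$ for all $u\le(3-\sqrt5)/2$. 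Accordingly the pointwise inequality itself breaks down well inside Regime~1: for $u=0.2$ one has $\alpha\approx0.653$ and (from first-order tangency) $\beta\approx-0.642$, so at $(s,t)=(1/2,1/2)$ the right side is $\approx 2(0.653)-2(0.642)(0.3)\approx 0.921$, exceeding $h(1/4)\approx 0.811$; a similar failure occurs for $u=0.38$. So the ``main obstacle'' you flag (global verification via second derivatives and boundary checks) cannot be overcome --- any correct proof must use independence beyond linearity, e.g.\ by conditioning on one variable, or by exploiting the concavity of the quadratic functional $\mu\mapsto\int h(\bar{s}\bar{t})\,\mu(ds)\mu(dt)$ on mean-constrained measures (the very fact this paper imports from Alweiss--Huang--Sellke in Theorem~\ref{thm7}) to reduce to two-point marginals and then analyze a finite-dimensional problem.

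For Regime~2 your endgame is right: the two-point law at $\{(3-\sqrt5)/2,\,1\}$ with weight $\bar{u}\cdot\frac{2}{\sqrt5-1}$ on the small atom does give exactly the ratio $(1-u)\frac{2}{\sqrt5-1}$, using $h(\phi^2)=h(\phi)$ for $\phi=(\sqrt5-1)/2$. But the two steps you only gesture at --- reducing the infimum to finitely supported (indeed two-point) distributions, and then showing the optimum among two-point mean-$u$ laws sits at $\{(3-\sqrt5)/2,1\}$ --- are the substantive content of the known proofs; ``Lagrangian stationarity forces small support'' does not by itself deliver the reduction (the functional is quadratic in $\mu$, and the support reduction in the literature comes from the concavity/extreme-point argument mentioned above), and the two-point optimization is a genuine computation you have not carried out. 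As it stands, neither regime is established, and Regime~1 needs a different method, not more careful calculus.
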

Gilmer's original paper \cite{gilmer2022constant} established a similar bound with a suboptimal constant; 
the sharp bound \eqref{e1} was established by \cite{chase2022approximate}\cite{alweiss2022improved}\cite{sawin2022improved}.
Using Proposition~\ref{prop1}, a weak form of the union-closed sets conjecture follows with constant $\frac{3-\sqrt{5}}{2}$:
Indeed, as in \cite{gilmer2022constant}, one may consider $X^n:=\{X_1,\dots,X_n\}$ and $Y^n$ two independent and identically distributed (i.i.d.) binary vectors, such that $\{i\in[n]\colon X_i=1\}$ is equiprobably distributed on the union-closed family of of sets.
Then, denoting by $H(\cdot|\cdot)$ the conditional Shannon entropy in bits, we have
\begin{align}
H(X^n\vee Y^n)&=\sum_{i=1}^n H(X_i\vee Y_i|X^{i-1}\vee Y^{i-1})
\\
&\ge \sum_{i=1}^n H(X_i\vee Y_i|X^{i-1}, Y^{i-1}),
\label{e_3}
\end{align}
where $\vee$ denotes the elementwise max, whereas 
\begin{align}
H(X^n)&=\sum_{i=1}^nH(X_i|X^{n-1})
\\
&=\sum_{i=1}^nH(X_i|X^{i-1},Y^{i-1}).
\end{align}
Thus, if $\mathbb{E}[X_i]<\frac{3-\sqrt{5}}{2}$ for all $i$, then setting $S=\mathbb{E}[X_i|X^{i-1}]$ and $T=\mathbb{E}[Y_i|Y^{i-1}]$, 
we obtain from Proposition~\ref{prop1} that $H(X^n\vee Y^n)>H(X^n)$,
so that the family cannot be union-closed as the equiprobable distribution maximizes the entropy for a given support.

Clearly, for this argument to work we only used the fact that $\max_{P_{X^nY^n}}H(X^n\vee Y^n)-H(X^n)>0$, where the max is over $P_{X^nY^n}$ under which $P_{X^n}=P_{Y^n}$.
This inequality appears very similar to the entropic formulation of a general version of the  reverse Brascamp-Lieb inequality, for which a tensorization property is known (see \cite{liu2017information}\cite{liu2018forward} for the max entropy version and 
\cite{anantharam2022unifying} for the independent coupling case);
the reason why we cannot apply tensorization of Brascamp-Lieb here (and hence proving the union-closed sets conjecture by solving a simple $n=1$ case) is that $P_{X^n}=P_{Y^n}$ plays an essential role.

It is tempting to strengthen Gilmer's lower bound by considering other coupling of $X_i$ and $Y_i$ so that $P_{X_i|X^{i-1}Y^{i-1}}=P_{X_i|X^{i-1}}$, 
$P_{Y_i|X^{i-1}Y^{i-1}}=P_{Y_i|Y^{i-1}}$ is still true, yet $H(X_i\vee Y_i|X^{i-1},Y^{i-1})$ becomes larger.
The main challenge, however, is that $S:=\mathbb{E}[X_i|X^{i-1}]$ and $T:=\mathbb{E}[Y_i|Y^{i-1}]$ will then have more complicated dependence structure, whereas the validity of \eqref{e1} relies strongly on the independence of $S$ and $T$.
Sawin \cite{sawin2022improved} proved that by taking a convex combination of the i.i.d.\ coupling and a coupling that maximizes $H(X_i\vee Y_i|X^{i-1},Y^{i-1})$, one can strictly improve the best lower bound $\frac{3-\sqrt{5}}{2}$ obtained by the i.i.d.\ coupling.
The best lower bound obtained by this approach was evaluated by Yu \cite{yu2023dimension} and Cambie \cite{cambie2022better}, the heart of which is the following:
\begin{prop}\label{prop2}
For $c^*$ and $\alpha^*$ that can be analytically defined (see below this proposition), the following is true:
For any $c<c^*$,
there exists $C>1$ such that 
\begin{align}
\bar{\alpha}^*\mathbb{E}[h(\bar{S}\bar{T})]+\alpha^*\mathbb{E}[h(S\vee R\vee\min(S+R,1/2))]
\ge C\mathbb{E}[h(S)]
\label{e_6}
\end{align}
whenever $P_{SR}$ is a symmetric distribution, $P_{ST}=P_SP_T$, $P_S=P_T$, and $\mathbb{E}[S]\le c$. 
\end{prop}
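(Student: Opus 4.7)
The plan is to lower-bound the two terms on the left-hand side of \eqref{e_6} separately, reducing both to functionals of the marginal $P_S$ alone, and then optimize over $P_S$ with $\mathbb{E}[S]\le c$. For the first term, the constraints $P_{ST}=P_S P_T$ and $P_S=P_T$ make $(S,T)$ i.i.d., so Proposition~\ref{prop1} applies directly: with $u:=\mathbb{E}[S]\le\tfrac{3-\sqrt{5}}{2}$,
\begin{align*}
\mathbb{E}[h(\bar S\bar T)]\ge\frac{h(2u-u^2)}{h(u)}\,\mathbb{E}[h(S)].
\end{align*}

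For the second term, I would first eliminate the coupling $R$ via concavity. Define $\psi(x):=h(\min(x,1/2))$; this function is nondecreasing and concave on $[0,\infty)$ (equal to $h$ on $[0,1/2]$ and constantly $1$ on $[1/2,\infty)$, with matching one-sided derivative $0$ at $x=1/2$). In the regime $s,r\in[0,1/2]$, the argument inside $h$ simplifies to $\min(s+r,1/2)$, so $h$ of the expression equals $\psi(s+r)$. By midpoint concavity $\psi(s+r)\ge\tfrac12[\psi(2s)+\psi(2r)]$ and the symmetry of $P_{SR}$,
\begin{align*}
\mathbb{E}[h(S\vee R\vee\min(S+R,1/2))]\ge\mathbb{E}[\psi(S+R)]\ge\mathbb{E}[\psi(2S)],
\end{align*}
up to corrections from the event $\{\max(S,R)>1/2\}$, which is small under the mean constraint and can be handled by a direct estimate using $h(\max(S,R))\ge h(S)\wedge h(R)$. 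The net result is a lower bound on the second term depending only on $P_S$.

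Combining the two bounds reduces \eqref{e_6} to the single-marginal inequality
\begin{align*}
\bar\alpha^*\mathbb{E}[h(\bar S\bar T)]+\alpha^*\mathbb{E}[\psi(2S)]\ge C\,\mathbb{E}[h(S)].
\end{align*}
Writing $\phi_1(u):=h(2u-u^2)/h(u)$ and $\phi_2(u):=\psi(2u)/h(u)$, evaluation on the deterministic distribution $P_S=\delta_u$ gives combined ratio $\bar\alpha^*\phi_1(u)+\alpha^*\phi_2(u)$. The constants $c^*$ and $\alpha^*$ are defined as the analytic optimizers of the resulting min--max problem: $\alpha^*$ is the mixing weight balancing the two terms at the joint worst-case $P_S$, and $c^*$ is the largest threshold for which the infimum of the LHS over $P_S$ with $\mathbb{E}[S]\le c$ strictly exceeds $\mathbb{E}[h(S)]$.

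The hardest step is the extreme-point reduction in this last stage. The two functionals have qualitatively different extremizers over $P_S$: the first, by Proposition~\ref{prop1}, is smallest at $P_S=\delta_u$, whereas the second achieves its trivial ratio $1$ at $P_S=(1-2u)\delta_0+2u\delta_{1/2}$. Because the first and second extremizers do not coincide, the joint infimum cannot be read off from either term individually, and the full minimization must be carried out, e.g., by reducing to two-point supports and solving the resulting finite-dimensional problem analytically. This joint variational analysis, which also pins down the closed-form values of $c^*$ and $\alpha^*$, is precisely the computation performed by Yu \cite{yu2023dimension} and Cambie \cite{cambie2022better}.
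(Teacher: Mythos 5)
There is a genuine gap, and it sits exactly where the difficulty of the proposition lives: your attempted elimination of the coupling $R$. The chain $\mathbb{E}[h(S\vee R\vee\min(S+R,1/2))]\ge\mathbb{E}[\psi(S+R)]\ge\mathbb{E}[\psi(2S)]$ is false on the constraint set, and not merely "up to small corrections." The adversary chooses the symmetric coupling $P_{SR}$ precisely to pair large values of $S$ with moderate values of $R$ so that the OR-entropy collapses, and this happens on the event $\{\max(S,R)>1/2\}$, which is not small under $\mathbb{E}[S]\le c\approx 0.382$ (Markov only gives probability up to $2c$, and the relevant extremizer uses it with probability $\approx 0.158$). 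Concretely, take the paper's extremal coupling \eqref{e7}--\eqref{e8}: $P_{SR}(1,b^*)=P_{SR}(b^*,1)=a^*$, $P_{SR}(b^*,b^*)=1-2a^*$ with $a^*\approx0.0789$, $b^*\approx0.3295$. On the mass at $(1,b^*)$ the integrand is $h(1)=0$, so $\mathbb{E}[h(S\vee R\vee\min(S+R,1/2))]=1-2a^*\approx0.842$, whereas your proposed marginal-only lower bound gives $\mathbb{E}[\psi(2S)]=(1-a^*)\psi(2b^*)+a^*\psi(2)=1$ (since $2b^*>1/2$). Your suggested patch $h(\max(S,R))\ge h(S)\wedge h(R)$ yields $0$ on this event (because $h(1)=0$), so it cannot recover the loss. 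Consequently the reduction to the single-marginal inequality with $\phi_2(u)=\psi(2u)/h(u)$ is unavailable, and the constants you would extract from $\bar\alpha^*\phi_1+\alpha^*\phi_2$ on point masses do not correspond to the actual $c^*,\alpha^*$, which are determined by a genuinely joint optimization.

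This is also where your route diverges structurally from what the proposition requires (and from the treatment the paper relies on, following Yu and Cambie): the second expectation cannot be decoupled into a functional of $P_S$ alone, because the worst-case symmetric coupling is the object that sets the value of $c^*$. The cited works keep the infimum over symmetric couplings of a fixed marginal, use Krein--Milman (plus mass-moving arguments) to reduce to finitely supported joint distributions, and then solve the resulting five-dimensional problem, whose optimizer is exactly the coupling \eqref{e7}--\eqref{e8} that defeats your bound. Your first step (applying Proposition~\ref{prop1} to the i.i.d.\ term) is fine, and your closing observation that the two terms have different extremizers is correct, but the argument as proposed cannot be completed without replacing the concavity/symmetry step by an analysis over joint couplings.
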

Note that $h(S\vee R\vee\min(S+R,1/2))$ is the maximum $H(X_i\vee Y_i|X^{i-1},Y^{i-1})$ for given $S$ and $T$.
Intuitively, \eqref{e_6} can hold for some $c^*>\frac{3-\sqrt{5}}{2}$ because the optimizer of the ratio of expected entropies in \eqref{e1} is close to a point mass at $\frac{3-\sqrt{5}}{2}$ when $u$ is close to $\frac{3-\sqrt{5}}{2}$, so when $\alpha^*$ is small the optimizer for \eqref{e_6} must be close to that point mass (in the sense of weak convergence), but the max entropy coupling produces strictly larger $\mathbb{E}[h(S\vee R\vee\min(S+R,1/2))]$ in that case.

To find the best values of $c^*$ and $\alpha^*$ in Proposition~\ref{prop2}, one can first use Krein-Milman to reduce the optimization to a finite (five) dimensional one \cite{yu2023dimension}.
Further cardinality reduction with mass moving arguments were given in  
\cite{cambie2022better}. 
Then the independent numerical optimization by Yu and Cambie confirmed that the optimizer is 
\begin{align}
P_{SR}(1,b)&=P_{SR}(b,1)=a;
\label{e7}
\\
P_{SR}(b,b)&=1-2a,
\label{e8}
\end{align}
with $b,a\in[0,1]$ solving the set of equations
\begin{align}
(1-2a)h(\tfrac1{2})
&=(1-a)h(b);
\\
(1-a)^2h(\bar{b}^2)
&=(1-a)h(b).
\label{e_bstar}
\end{align}
Hence
$
h(b)(2-h(b))=h(\bar{b}^2)
$, which has two roots in $(0,1)$, and we set $b^*\approx0.329454738503037$ as the larger root.
Let $a^*\approx 0.0788772927059232$ be the corresponding solution for $a$. These solutions led to 
\begin{align}
c^*\approx 0.3823455\label{e10}
\end{align}
which is the optimal value for the inequality in \eqref{e_6} and hence the best constant for the union-closed sets conjecture obtainable from Sawin's approach.
Define 
\begin{align}
f_{\alpha}(a,b):=\bar{\alpha}\mathbb{E}[h(\bar{S}\bar{T})]+\alpha\mathbb{E}[h(S\vee R\vee\min(S+R,1/2))]
-\mathbb{E}[h(S)].
\label{e_12}
\end{align}
Then from the equations 
\begin{align}
\partial_af_{\alpha}(a,b)
\,da+
\partial_bf_{\alpha}(a,b)
\,db&=0;
\\
d\mathbb{E}[S]=
\bar{b}\,da+\bar{a}\,db&=0,
\end{align}
we obtain
\begin{align}
\alpha^*=\frac{-\bar{a}[2\bar{a}h(\bar{b}^2)-h(b)]
+\bar{b}[2\bar{a}^2\bar{b}\log\frac{1-\bar{b}^2}{\bar{b}^2}+\bar{a}\log\frac{\bar{b}}{b}]}{-2\bar{a}[\bar{a}h(\bar{b}^2)-1]
+2\bar{a}^2\bar{b}^2\log\frac{1-\bar{b}^2}{\bar{b}^2}}.
\end{align}
Plugging in the values of $a^*$ and $b^*$ we obtain 
$\alpha^*\approx 0.0356069$.

{\bf Contribution.} In this paper we improve the previous best lower bound $c^*$ in \eqref{e10} for the union-closed sets conjecture by considering a new class of couplings of $X^n$ and $Y^n$ that ensures $P_{ST}$ is a mixture of i.i.d.\ distribution.
In other words, our coupling ensures the existence of some random variable $U$ such that $S$ and $T$ are i.i.d.\ conditioned on $U$.
The main observation is that the equality case of \eqref{e_6}, excluding the trivial case of $P_S$ supported on $\{0,1\}$, is given by \eqref{e7}-\eqref{e8}, which is a symmetric distribution but $[P_{ST}(s,t)]_{s,t\in\{b,1\}}$ is not positive semidefinite and hence not a mixture of i.i.d.\ distributions.

By taking a convex combination of the left side of \eqref{e_12} and $H(X_i\vee Y_i|X^{i-1},Y^{i-1})$ under the new coupling scheme, we can show analytically that $c^*$ can be strictly improved (Section~\ref{sec_analytical}). This argument is analogous to that of Sawin \cite{sawin2022improved} which proves strict improvement over the (unconditional) i.i.d.\ coupling by considering the limit of small convex combination weight, without numerical evaluation.
Note that $c^*$ was previously the best lower bound; 
in Remark~\ref{rem_mc} we show that if we apply a similar convex combination argument (in the regime of small combination weight) to Yu's general bound based on maximal correlation  \cite{yu2023dimension}, there is no improvement over $c^*$. 

In order to further numerically evaluate bounds obtainable from this new class of coupling, we then consider a convex combination of just Gilmer's original i.i.d.\ coupling and the conditionally i.i.d.\ coupling, without Sawin's max entropy coupling in the second term in \eqref{e_6}.
This allows us to show that considering two-mixture $P_{ST}$ is sufficient.
Under addition assumptions on the class of conditionally i.i.d.\ coupling employed, we further reduce the computation of the lower bound to a 9 dimensional optimization problem (Section~\ref{sec_numerical}).
For a basic instance from this class, numerical results using Matlab optimization package (interior point, sqp, active-set algorithms) with at least $10^5$ random initializations suggests that the best local optimizer has a certain structure.
Under this structural assumption, the global optimizer can be expressed as the solution to a set of analytic equations,
which in turns shows that the lower bound for the union-closed sets conjecture can be improved to approximately 0.382709 (Section~\ref{sec_numerical}).

\section{Conditionally IID Coupling}
In this section we explain the main idea for improvement based on conditionally i.i.d.\ coupling.
First, we formalize a method of sampling $X_i$ and $Y_i$ given $(X^{i-1},Y^{i-1})$ as follows:
\begin{defn}
We say $\Pi$ is a \emph{protocol} if for any 
$(s,t)\in[0,1]^2$, $\Pi_{s,t}$ is a distribution on $\{0,1\}^2$ satisfying: 
For any $(s,t)\in[0,1]^2$, we have $\mathbb{E}[X]=s$ and $\mathbb{E}[Y]=t$, 
where $(X,Y)\sim \Pi_{s,t}$.
\end{defn}
Given any $P_{X^n}$, a protocol $\Pi$ defines a method of randomly and sequentially generating $X^n$ and $Y^n$: for each $i=1,\dots,n$ and given $(X^{i-1},Y^{i-1})$, we generate
$X_i$ and $Y_i$ according to $\Pi_{s,t}$, where $s:=P_{X_i|X^{i-1}}(1|X^{i-1})$ and $t:=P_{X_i|X^{i-1}}(1|Y^{i-1})$, and $P_{X_i|X^{i-1}}$ is the conditional distribution induced by $P_{X^n}$.
This is a broad class of couplings that includes Gilmer's i.i.d.\ coupling \cite{gilmer2022constant}, Sawin's max-entropy coupling \cite{sawin2022improved}, Yu's  maximal correlation based coupling \cite{yu2023dimension}, and also the conditionally i.i.d.\ coupling that we will soon introduce.

The idea of taking convex combination of protocols in Proposition~\ref{prop2} can now be explained in larger generality:
Suppose that there are protocols $\Pi^{(1)}$, \dots, $\Pi^{(K)}$
such that for all distribution $\mu$ on $[0,1]$ with mean not exceeding $c$, we have 
\begin{align}
\sum_{k=1}^K w_k \inf_{P_{ST}\in \mathcal{C}_k(\mu)} \mathbb{E}[h(\Pi_{S,T}^{(k)}(0,0))]
\ge 
C\mathbb{E}[h(S)]
\label{e16}
\end{align}
where $w_1,\dots,w_K\ge 0$, $\sum_{k=1}^Kw_k=1$, $C>1$ are fixed, and $\mathcal{C}_k(\mu)$ is a set large enough to contain all the couplings of $\mu$ and $\mu$ that can possibly be induced by the $k$-protocol. 
For example, in Gilmer's approach, $K=1$, $\Pi_{s,t}^{(1)}={\rm Bern}(s)\times {\rm Bern}(t)$, and $\mathcal{C}_1$ is the singleton set containing $\mu\times \mu$.
Sawin's improvement can be interpreted as the case of $K=2$, and $\Pi_{s,t}^{(2)}$ is the most greedy coupling of ${\rm Bern}(s)$ and ${\rm Bern}(t)$ that maximizes the entropy of the OR function.
Correspondingly, $\mathcal{C}_2(\mu)$ is a rather large set -- probably no better choice than taking all symmetric couplings of $\mu$ and $\mu$.
In Yu's maximal correlation based formulation, $\Pi_{s,t}$ satisfies a maximal correlation upper bound, and so by tensorization $\mathcal{C}(\mu)$ is the set of distributions satisfying the same maximal correlation upper bound.

\begin{prop}\label{prop3}
If \eqref{e16} holds with constant $c>0$, then $c$ is a lower bound for the constant in the union-closed sets conjecture.
\end{prop}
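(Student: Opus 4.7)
The plan is to follow Gilmer's entropic template, with \eqref{e16} playing the role of Proposition~\ref{prop1}. I would suppose for contradiction a union-closed family $\mathcal{F}$ on $[n]$ in which every coordinate belongs to strictly fewer than $c$-fraction of the sets, take $X^n$ uniform on $\mathcal{F}$ (so $\mathbb{E}[X_i]<c$ for every $i$), and construct $Y^n$ by a \emph{mixture coupling}: draw i.i.d.\ labels $J_1,\dots,J_n$ with $P(J_i=k)=w_k$, independent of $X^n$, and for each $i$, given $(X^{i-1},Y^{i-1})$ and $J_i=k$, sample $(X_i,Y_i)\sim \Pi^{(k)}_{s_i,t_i}$ with $s_i:=P_{X_i|X^{i-1}}(1|X^{i-1})$ and $t_i:=P_{X_i|X^{i-1}}(1|Y^{i-1})$.

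The first routine step is an induction on $i$ using the defining protocol property $\mathbb{E}[X]=s,\ \mathbb{E}[Y]=t$ under $\Pi_{s,t}$, to establish $P_{Y^n}=P_{X^n}$ and $X_i\perp Y^{i-1}\mid X^{i-1}$; the latter gives $H(X_i\mid X^{i-1},Y^{i-1})=H(X_i\mid X^{i-1})$. The chain rule as in \eqref{e_3} then yields
\begin{align*}
H(X^n\vee Y^n)-H(X^n)\;\geq\;\sum_{i=1}^n\bigl[H(X_i\vee Y_i\mid X^{i-1},Y^{i-1})-H(X_i\mid X^{i-1})\bigr].
\end{align*}
Since $J_i$ is independent of $(X^{i-1},Y^{i-1})$, conditioning on $J_i$ gives $H(X_i\vee Y_i\mid X^{i-1},Y^{i-1})\geq \sum_{k} w_k\,\mathbb{E}[h(\Pi^{(k)}_{S_i,T_i}(0,0))]$, where $(S_i,T_i)$ are the induced conditional probabilities and their common marginal $\mu_i$ has mean $\mathbb{E}[X_i]<c$. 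Applying \eqref{e16} to $\mu_i$ bounds the left side by $C\,\mathbb{E}[h(S_i)]=C\,H(X_i\mid X^{i-1})$, so each summand is at least $(C-1)H(X_i\mid X^{i-1})$; summing produces $H(X^n\vee Y^n)\geq C\cdot H(X^n)>H(X^n)$. This contradicts $H(X^n\vee Y^n)\leq \log_2|\mathcal{F}|=H(X^n)$, which holds because $X^n\vee Y^n\in\mathcal{F}$ by union-closedness and $X^n$ is uniform on $\mathcal{F}$.

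The only conceptually nontrivial point, and the main obstacle to a completely formal statement, is verifying that the joint law $P_{S_iT_i}$ induced by the mixture protocol lies in each $\mathcal{C}_k(\mu_i)$ required by the premise. Concretely, one reads this requirement into the phrase ``large enough to contain all the couplings\dots\ induced by the $k$-protocol,'' interpreting ``induced'' broadly enough to include the conditional marginals arising from the mixture construction (equivalently, requiring each $\mathcal{C}_k(\mu)$ to be convex and closed under the mixtures encountered here). For the instances that will be used later in the paper---symmetric couplings or couplings with a prescribed maximal-correlation bound---this inclusion is automatic, so no additional work is needed.
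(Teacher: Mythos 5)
There is a genuine gap, and it is exactly the point you flagged and then dismissed. Your construction uses a \emph{single} pair $(X^n,Y^n)$ generated by the mixture protocol $\sum_k w_k\Pi^{(k)}$ (via the labels $J_i$). To invoke \eqref{e16} you then need the one induced joint law $P_{S_iT_i}$ to lie \emph{simultaneously} in every $\mathcal{C}_k(\mu_i)$ with $w_k>0$, since the $k$-th term of \eqref{e16} only controls $\mathbb{E}[h(\Pi^{(k)}_{S,T}(0,0))]$ for couplings inside $\mathcal{C}_k(\mu_i)$. The hypothesis of the proposition does not give you this: each $\mathcal{C}_k(\mu)$ is only required to contain couplings induced by the $k$-protocol \emph{alone}, not couplings induced by a mixture of different protocols. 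And for the instantiations this proposition is actually applied to, the needed inclusion is false, not automatic: in Proposition~\ref{prop2} and Lemma~\ref{lem8} the first class is $\mathcal{C}_1(\mu)=\{\mu\times\mu\}$ (the i.i.d.\ coupling), and under your mixture coupling $S_i$ and $T_i$ are no longer independent, so $P_{S_iT_i}\notin\mathcal{C}_1(\mu_i)$; likewise $\mathcal{C}_3$ consists of conditionally i.i.d.\ couplings, and mixing in the max-entropy protocol at earlier coordinates destroys that structure. Your proposed remedy---reinterpret ``induced'' so that each $\mathcal{C}_k$ also contains mixture-induced couplings---changes the hypothesis: enlarging $\mathcal{C}_k$ shrinks the infima in \eqref{e16}, and the paper's entire improvement rests on keeping $\mathcal{C}_1$ and $\mathcal{C}_3$ small (strictly smaller than all symmetric couplings), so the proposition proved under that reinterpretation would not support the later results.

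The paper avoids this by not mixing at all: it generates $K$ \emph{separate} pairs $(X^{(k)n},Y^{(k)n})$, the $k$-th produced entirely by protocol $\Pi^{(k)}$, each with marginal $P_{X^n}$. Then for each $k$ the chain-rule argument as in \eqref{e_3} uses a coupling $P_{S_iT_i}$ that is induced by the $k$-protocol alone, hence lies in $\mathcal{C}_k(\mu_i)$ by definition, and summing the $K$ chains with weights $w_k$ and applying \eqref{e16} coordinatewise gives $\sum_k w_k H(X^{(k)n}\vee Y^{(k)n})\ge C\,H(X^n)$. Union-closedness would force $H(X^{(k)n}\vee Y^{(k)n})\le\log_2|\mathcal{F}|=H(X^n)$ for every $k$, hence also for the convex combination, yielding the contradiction. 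The rest of your argument (the induction giving $P_{Y^n}=P_{X^n}$ and $H(X_i\mid X^{i-1},Y^{i-1})=H(X_i\mid X^{i-1})$, the chain rule, and the max-entropy contradiction) is fine, but the convex combination must be taken over $K$ parallel couplings rather than realized by a single randomized protocol.
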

\begin{proof}
We can identify the given family of subsets with binary sequences where a coordinate 1 indicates inclusion of the corresponding element.
Let $P_{X^n}$ be the equiprobable distribution on the resulting set of binary sequences. Let $(X^{(k)n},Y^{(k)n})$ be induced by protocol $\Pi^{(k)}$.
Assuming $P_{X_i}(1)<c$ for all $i$, then
analogous to \eqref{e_3}, we obtain from \eqref{e16} that
\begin{align}
\sum_{k=1}^K w_kH(X^{(k)n}\vee Y^{(k)n})
\ge CH(X^n).
\end{align}
Since the equiprobable distribution maximizes entropy among distributions supported on the a given set, this would imply that the given family is not union-closed.
\end{proof}

We next define a new class of $\Pi^{(3)}$ for which $h(\Pi_{s,t}^{(3)}(0,0))$ is larger than Gilmer's independent coupling, yet $\mathcal{C}_3(\mu)$ is strictly smaller than the set of symmetric couplings $\mathcal{C}_2$ as in Sawin's improvement.

\begin{defn}\label{def_ciid}
A protocol $\Pi$ is \emph{conditionally IID} if it can represented as 
\begin{align}
\Pi_{s,t}(x,y)
=\int Q_{u,s}(x)Q_{u,t}(y)P_U(du)
\label{e12}
\end{align}
where $Q_{u,s}$ is a Bernoulli distribution whose mean is a function of $(u,s)$, and $P_U$ is an arbitrary probability measure. 
By the isomorphism theorem of standard probability spaces, we can assume that $P_U$ is the uniform probability distribution on $[0,1]$.
\end{defn}

The max-entropy protocol of Sawin, which uses $\Pi^{(2)}_{s,t}(0,0)=1-s\vee t\vee\min(s+t,1/2)$, is not conditionally IID. Indeed, for $s=t<1/4$ we see $[\Pi_{s,t}^{(2)}(x,y)]_{x,y\in\{0,1\}}$ is matrix with diagonals equal 0, $1-2s$ and off-diagonals equal $s$, so it is not positive semidefinite, and hence cannot be the form \eqref{e12}.

\begin{example}
Let $\mathcal{X}$ and $\mathcal{Y}$ be two players with their local randomness, and 
let the common randomness be $U$ a random variable uniformly distributed on $[0,1]$.
Let $a(\cdot)$ be a measurable function on $[0,1]$.
Given $U=u$ and $s$ in $[0,1]$, let $X\sim Q_{u,s}$ be simulated this way: use the local randomness of $\mathcal{X}$ (which is not known to $\mathcal{Y}$) to simulate $B\sim {\rm Ber}(a(s))$. If $B=0$, then $X$ is generated as a ${\rm Ber}(s)$ random variable using the local randomness of $\mathcal{X}$; if $B=1$, we set $X=1_{U<s}$. 
Then $X\sim {\rm Ber}(s)$ (conditioned on $s$). 
Generate $Y$ similarly using $u$ and $t$. 
We have
\begin{align}
&\quad \mathbb{P}[X=Y=0|U]
\nonumber\\
&=
a(s)a(t)1_{U\ge s\vee t}
+a(s)(1-a(t))1_{U\ge s}\bar{t}
\nonumber\\
&\quad+(1-a(s))a(t)\bar{s}1_{U\ge t}
+(1-a(s))(1-a(t))\bar{s}\bar{t}
\end{align}
and hence
\begin{align}
\Pi_{s,t}(0,0)
&=a(s)a(t)(1-s\vee t)
+(1-a(s))a(t)\bar{s}\bar{t}
\nonumber\\
&\quad+a(s)(1-a(t))\bar{s}\bar{t}
+(1-a(s))(1-a(t))\bar{s}\bar{t}
\\
&=\bar{s}\bar{t}+a(s)a(t)(\bar{s}\wedge \bar{t}-\bar{s}\bar{t}).
\end{align}
\end{example}
If we choose $a()$ to maximize $h(\Pi_{s,t}(0,0))=h(\bar{t}^2+a(t)^2t\bar{t})$ for any given $t$, we are led to
\begin{align}
a(t)=
\left\{
\begin{array}{cc}
 0    &  t\le 1-\frac1{\sqrt{2}}\\
  \sqrt{\frac{1-2\bar{t}^2}{2t\bar{t}}}   & 1-\frac1{\sqrt{2}}<t\le \frac1{2}
  \\
  1 & t>\frac1{2}
\end{array}
\right..
\label{e_23}
\end{align}

\begin{example}\label{ex2}
Let $P_U$ be the uniform distribution on $[0,1]$, and let $f()$ be a measurable function on $[0,1]$ satisfying $0\le f(\bar{s})\le s\wedge\bar{s}$. 
Given $u,s\in[0,1]$, let $X\sim Q_{u,s}$ be simulated this way:
set $X=0$ with probability $\bar{s}+f(\bar{s})(1_{u>\frac1{2}}-1_{u\le \frac1{2}})$, and $X=1$ with the remaining probability.
Then 
$
\Pi_{s,t}(0,0)=\bar{s}\bar{t}+f(\bar{s})f(\bar{t}).
$
\end{example}

The coupling induced by a conditional IID protocol can be interpreted as simulating each $X_i$ by looking only at $X^{i-1}$ and $U^{i-1}$, without referencing the $Y$-sequence, and similarly for each $Y_i$.
Therefore, $X^i$ and $Y^i$ are conditionally i.i.d.\ given $U^i$ for all $n$.
Hence $\mathbb{E}[X_i|X^{i-1}]$ and $\mathbb{E}[Y_i|Y^{i-1}]$, which are functions of $X^{i-1}$ and $Y^{i-1}$, are conditionally i.i.d.
We can therefore take
\begin{align}
\mathcal{C}_3(\mu)
:=\{\textrm{couplings of $\mu$ and $\mu$}
\}\cap {\rm cl}\,{\rm conv}\{\textrm{symmetric rank-1 measures}\}
\end{align}
where ${\rm conv}$ denotes the convex hull, and the closure ${\rm cl}$ is with respect to weak convergence (ensuring that we can invoke Krein-Milman later).

\section{Analytic Proof of Strict Improvement}\label{sec_analytical}
Recall that $c^*\approx0.3823455$, defined around \eqref{e_bstar} as solution to analytic equations, was the previous best lower bound for the union-closed sets conjecture.
The main result of the section is the following:
\begin{thm}\label{thm6}
The union-closed sets conjecture holds with a constant strictly larger than $c^*$.
\end{thm}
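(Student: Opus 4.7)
The plan is to adapt Sawin's perturbative argument: combine Sawin's two-protocol bound with a small weight on a carefully chosen conditionally i.i.d.\ protocol $\Pi^{(3)}$, exploiting the observation that Sawin's nontrivial extremizer $\mu^*=(1-a^*)\delta_{b^*}+a^*\delta_1$ induces a joint coupling $P_{SR}^*$ which is not a mixture of i.i.d.\ measures. Concretely, I would take $\Pi^{(3)}$ as in Example~\ref{ex2} with $f(x)=\sqrt{1/2-x^2}$ on a small neighborhood of $x=\bar b^*$ and $f\equiv 0$ elsewhere; the admissibility constraint $f(\bar s)\le s\wedge\bar s$ holds because $\sqrt{0.05}\approx 0.224<b^*\approx 0.329$. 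This choice makes $\Pi^{(3)}_{s,s}(0,0)=1/2$ on the chosen neighborhood (hence $h=1$) and $\Pi^{(3)}_{s,t}(0,0)=\bar s\bar t$ otherwise; in particular $\Pi^{(3)}_{s,t}(0,0)=0$ whenever $s\in\{0,1\}$ or $t\in\{0,1\}$.

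The core computation is the value of $\inf_{\mathcal{C}_3(\mu^*)}\mathbb{E}[h(\Pi^{(3)}_{S,T}(0,0))]$. Since $\mu^*$ is supported on $\{b^*,1\}$ and $\Pi^{(3)}_{s,t}(0,0)=0$ whenever one of the arguments is $1$, only $(S,T)=(b^*,b^*)$ contributes, so $\mathbb{E}[h(\Pi^{(3)}_{S,T}(0,0))]=P_{ST}(b^*,b^*)$. Any $P_{ST}\in\mathcal{C}_3(\mu^*)$ can be written as $\int\lambda_u\otimes\lambda_u\,dP_U(u)$; since $P_{ST}$ has marginal $\mu^*$, each $\lambda_u$ is supported on $\{b^*,1\}$ for $P_U$-a.e.\ $u$, and parametrizing $\pi(u):=\lambda_u(\{1\})$ gives $\int\pi\,dP_U=a^*$. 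Jensen's inequality applied to the convex function $x\mapsto(1-x)^2$ then yields $P_{ST}(b^*,b^*)=\int(1-\pi)^2\,dP_U\ge(1-a^*)^2$, with equality at the i.i.d.\ coupling, so $\inf_{\mathcal{C}_3(\mu^*)}\mathbb{E}[h(\Pi^{(3)}_{S,T}(0,0))]=(1-a^*)^2$. Using the identity $h(\bar b^{*2})=h(b^*)(2-h(b^*))$ derived just after \eqref{e_bstar}, we get $1-h(\bar b^{*2})=(1-h(b^*))^2$, so the gain over the i.i.d.\ Gilmer term at $\mu^*$ is
\[\Delta(\mu^*):=\inf_{P_{ST}\in\mathcal{C}_3(\mu^*)}\mathbb{E}[h(\Pi^{(3)}_{S,T}(0,0))]-\mathbb{E}_{\mu^*\times\mu^*}[h(\bar S\bar T)]=(1-a^*)^2(1-h(b^*))^2>0.\]

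Finally, I would form the combined bound with weights $(\bar\alpha^*-\beta,\alpha^*,\beta)$ on $(\Pi^{(1)},\Pi^{(2)},\Pi^{(3)})$ for small $\beta>0$, giving $H_\beta(\mu):=\psi(\mu)+\beta\Delta(\mu)$, where $\psi$ is Sawin's functional from \eqref{e_12} at $\alpha=\alpha^*$ and $\Delta(\mu)$ is defined analogously to $\Delta(\mu^*)$ above. Since $\mu^*$ attains equality in Sawin's bound at $c=c^*$, $\psi(\mu^*)=0$, and therefore $H_\beta(\mu^*)=\beta\Delta(\mu^*)>0$. The remaining task is to extend strict positivity to all $\mu$ with $\mathbb{E}_\mu[S]\le c^*+\delta$ for some $\delta>0$: by weak continuity of $\psi$ and $\Delta$ together with weak compactness, $H_\beta>0$ on some weak neighborhood of $\mu^*$; outside this neighborhood and away from measures supported on $\{0,1\}$, Sawin's bound gives $\psi\ge\eta\,\mathbb{E}_\mu[h(S)]$ for some $\eta>0$, so for $\beta$ small the perturbation cannot overpower the slack; a ratio argument handles the regime near trivial measures. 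Invoking Proposition~\ref{prop3} then delivers the improved lower bound $c^*+\delta$. The hard part will be the uniformity of this compactness argument, notably a careful verification that $\mu^*$ is the unique nontrivial extremizer of Sawin's bound and controlling the behavior near the trivial extremizers where $\mathbb{E}_\mu[h(S)]\to 0$.
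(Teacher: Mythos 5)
Your proposal follows essentially the same route as the paper's proof (Lemma~\ref{lem8}): perturb Sawin's optimal combination by a small weight on a conditionally i.i.d.\ protocol, exploit that the nontrivial extremizer $\mu^*$ of Sawin's bound has a non-positive-semidefinite extremal coupling so the conditionally i.i.d.\ class incurs a strict gain there (your Jensen computation $\inf_{\mathcal{C}_3(\mu^*)}\mathbb{E}[h(\Pi^{(3)}_{S,T}(0,0))]=(1-a^*)^2$ is an explicit version of the paper's indirect ``not in $\mathcal{C}_3$'' argument), and then handle uniformity by compactness, a Lemma~\ref{lem3}-type ratio bound near trivial measures, and a limiting argument in the mean constraint, exactly as the paper does. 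One small correction: with your discontinuous $f$ the functional $\Delta$ is not weakly continuous as you claim, but it is lower semicontinuous if the neighborhood of $\bar{b}^*$ is taken open, which is all the argument needs -- or you can avoid the issue entirely by using the paper's continuous choice \eqref{e_23}.
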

The idea is to use \eqref{e16} and perturb the previous best scheme with a conditionally IID protocol.
First, we prove the following result which allows us to not worry about the case of $P_S$ supported on $\{0,1\}$, a trivial equality case for \eqref{e16}.
This observation was briefly mentioned (without quantitative bound) in Sawin's paper \cite{sawin2022improved}.
\begin{lem}\label{lem3}
Let $c\in(0,1)$.
Suppose that $\mu_n$ is a sequence of probability measures with mean equal to $c$ and converging weakly to $\mu^*$, where $\mu^*$ is the (unique) probability measure supported on $\{0,1\}$ and with mean equal to $c$. 
Then 
\begin{align}
\liminf_{n\to\infty}
\frac{\int h(\bar{s}\bar{t})\mu_n(ds)\mu_n(dt)}{\int h(s)\mu_n(ds)}
\ge 2\bar{c}.
\end{align}
\end{lem}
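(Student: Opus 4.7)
The plan is to reduce the inequality to a single integral whose integrand vanishes on the boundary of $[0,1]^2$, and then bound this integral by $o(G(\mu_n))$ via a partitioning argument. Define
\[
\Phi(s,t) \;:=\; h(\bar s\bar t) - \bar t\, h(s) - \bar s\, h(t).
\]
Using $\int d\mu_n = 1$ and $\int s\, d\mu_n = c$, one computes $\iint [\bar t\, h(s) + \bar s\, h(t)]\, d\mu_n(s)\, d\mu_n(t) = 2\bar c\int h\, d\mu_n$, so
\[
\iint h(\bar s\bar t)\, d\mu_n(s)\, d\mu_n(t) \;-\; 2\bar c \int h\, d\mu_n \;=\; \iint \Phi(s,t)\, d\mu_n(s)\, d\mu_n(t).
\]
Writing $G(\mu) := \int h\, d\mu$, the lemma reduces to $\iint \Phi\, d\mu_n\otimes d\mu_n = o(G(\mu_n))$.

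The structural fact I will exploit is that $\Phi$ vanishes on the entire boundary of $[0,1]^2$: indeed $\Phi(0,t) = h(\bar t) - h(t) = 0$, $\Phi(1,t) = h(0) = 0$, and symmetrically in $t$. Moreover, a two-term Taylor expansion at each corner shows that the ratio $\psi(s,t) := |\Phi(s,t)|/(h(s)+h(t))$ (set to $0$ where the denominator vanishes) extends to a continuous function on $[0,1]^2$ with $\psi \equiv 0$ on $\partial([0,1]^2)$. Near $(0,0)$ one has $\Phi(s,t) = O(s+t)$ while $h(s)+h(t)$ is of order $s\log_2\tfrac{1}{s}+t\log_2\tfrac{1}{t}$, giving a ratio $O(1/|\log(s+t)|)\to 0$. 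Near $(1,1)$, an algebraic cancellation between the leading logarithmic terms of $h(\bar s\bar t)$ and of $\bar t h(s)+\bar s h(t)$ yields $\Phi(s,t)\sim -\bar s\bar t/\ln 2$, quadratic in $(\bar s,\bar t)$ against a linear denominator. The mixed corners give $\Phi\sim -\bar t\, h(s)$, whose ratio to $h(s)+h(t)$ also vanishes in the limit.

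Given such a $\psi$, the quantitative estimate proceeds by partitioning. For any $\eta>0$, uniform continuity of $\psi$ together with $\psi|_{\partial} = 0$ supplies some $\eta'\in(0,1/2)$ with $\psi(s,t)\leq \eta$ whenever $(s,t)\notin [\eta',1-\eta']^2$. Writing $p_n := \mu_n((\eta',1-\eta'))$, one bounds
\[
\iint |\Phi|\, d\mu_n\otimes d\mu_n \;\leq\; \eta\iint (h(s)+h(t))\, d\mu_n\otimes d\mu_n + \|\Phi\|_\infty\, p_n^2 \;=\; 2\eta\, G(\mu_n) + \|\Phi\|_\infty\, p_n^2.
\]
Since $h\geq h(\eta')$ on $[\eta',1-\eta']$, we have $G(\mu_n)\geq h(\eta')\, p_n$, so the second summand divided by $G(\mu_n)$ is at most $\|\Phi\|_\infty\, p_n/h(\eta')$, which tends to $0$ as $n\to\infty$ for fixed $\eta'$ (using $p_n\to 0$ by weak convergence, since $\mu^*(\{\eta',1-\eta'\}) = 0$). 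Hence $\limsup_n |F(\mu_n) - 2\bar c\, G(\mu_n)|/G(\mu_n) \leq 2\eta$; letting $\eta\to 0$ yields the lemma.

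The main obstacle is the boundary vanishing of $\psi$ in the second paragraph, particularly at the corner $(1,1)$: one must detect the cancellation between the leading logarithmic terms of $h(\bar s\bar t)$ and of $\bar t h(s)+\bar s h(t)$, upgrading the order of $\Phi$ from the naive linear estimate to the correct quadratic scaling in $(\bar s,\bar t)$. Without this observation $\psi$ would fail to vanish at $(1,1)$ and the partitioning argument would break down; with it, the remaining analysis is a clean application of weak convergence.
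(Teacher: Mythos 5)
Your proof is correct, and it takes a genuinely different route from the paper's. The paper argues by localization: using the weak convergence it picks $\epsilon_n\downarrow 0$ with $\mu_n([\epsilon_n,1-\epsilon_n])\le\epsilon_n$, splits $[0,1]$ into $\mathcal{A}_n=[0,\epsilon_n)$ and $\mathcal{B}_n=[\epsilon_n,1]$, discards the $\mathcal{B}_n\times\mathcal{B}_n$ block, handles the two cross blocks via the asymptotics $h(x)\sim x\ln\frac1x$ (each contributing a factor $\mu_n(\mathcal{A}_n)\to\bar{c}$ against $\int_{\mathcal{B}_n}h\,d\mu_n$), and handles the $\mathcal{A}_n\times\mathcal{A}_n$ block by applying Proposition~\ref{prop1} to the conditional measure on $\mathcal{A}_n$, whose mean tends to $0$, so that the ratio $h(2u-u^2)/h(u)\to 2$ supplies the factor $2$. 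You instead subtract the exact comparison function $\bar{t}h(s)+\bar{s}h(t)$, whose double integral equals $2\bar{c}\int h\,d\mu_n$ precisely because the mean is $c$, and show the defect $\Phi$ integrates to $o(\int h\,d\mu_n)$ by proving that $\psi=|\Phi|/(h(s)+h(t))$ vanishes on the boundary of the square and then combining compactness with weak convergence; the decisive cancellation at $(1,1)$, giving $\Phi\sim-\bar{s}\bar{t}/\ln 2$, is correctly identified, and the corner estimates at $(0,0)$ and the mixed corners check out as well. Your route is self-contained (it never invokes Proposition~\ref{prop1}) and proves the stronger statement that the ratio actually converges to $2\bar{c}$, whereas the paper's route is shorter because it reuses the sharp inequality already available and only needs one-sided bounds on each block. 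One small repair: define $p_n:=\mu_n([\eta',1-\eta'])$ with the closed interval, so that $(\mu_n\otimes\mu_n)([\eta',1-\eta']^2)=p_n^2$, and deduce $p_n\to 0$ from the portmanteau theorem for closed sets disjoint from $\{0,1\}$ (equivalently, from the fact that $\eta'$ and $1-\eta'$ are not atoms of $\mu^*$, which is what your parenthetical is implicitly using); this does not affect the argument.
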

\begin{proof}
By the assumption of weak convergence, we can pick a sequence $\epsilon_n\downarrow 0$ such that 
\begin{align}
\mu_n([\epsilon_n,1-\epsilon_n]) \le \epsilon_n.
\end{align}
Let $\mathcal{A}_n:=[0,\epsilon_n)$ and $\mathcal{B}_n:=[\epsilon_n,1]$.
Using $h(x)\sim x\ln\frac1{x}$ as $x\to0$, we see
\begin{align}
\lim_{n\to\infty}\inf_{s\in\mathcal{A}_n,t\in\mathcal{B}_n}\frac{h(\bar{s}\bar{t})}{h(\bar{t})}=1.
\end{align}
Therefore we have 
\begin{align}
\liminf_{n\to\infty}\frac{\int_{\mathcal{A}_n\times \mathcal{B
}_n} h(\bar{s}\bar{t})\mu_n(ds)\mu_n(dt)}
{\int_{\mathcal{B}_n}{h(s)\mu_n(ds)}}
\ge \lim_{n\to\infty}\mu_n(\mathcal{A}_n)=\bar{c}.
\label{e5}
\end{align}
Moreover, applying Proposition~\ref{prop1} to the probability measure (conditional probability) $\frac1{\mu_n(\mathcal{A}_n)}\mu_n|_{\mathcal{A}_n}$, whose mean is smaller than $\epsilon_n$, we obtain 
\begin{align}
\liminf_{n\to\infty}\frac{\int_{\mathcal{A}_n\times \mathcal{A}_n}h(\bar{s}\bar{t})\mu_n(ds)\mu_n(dt)}{\int_{\mathcal{A}_n}h(s)\mu_n(dt)}
\ge 2\lim_{n\to\infty}\mu_n(\mathcal{A}_n)
=2\bar{c}.
\label{e6}
\end{align}
The lemma then follows by \eqref{e5}-\eqref{e6} and 
$\int_{\mathcal{A}_n\times \mathcal{B
}_n} h(\bar{s}\bar{t})\mu_n(ds)\mu_n(dt)=\int_{\mathcal{B}_n\times \mathcal{A
}_n} h(\bar{s}\bar{t})\mu_n(ds)\mu_n(dt)$.
\end{proof}
Theorem~\ref{thm6} now follows by Proposition~\ref{prop3} and the following observation:
\begin{lem}\label{lem8}
Let $a(\cdot)$ be as in \eqref{e_23} and let $\Pi^{(3)}$ be the corresponding conditionally i.i.d.\ protocol.
There exists $\beta\in(0,1)$, $c'>c^*$, and $C>1$ such that 
\begin{align}
\bar{\alpha}^*\bar{\beta}\mathbb{E}[h(\bar{S}\bar{T})]+\alpha^*\bar{\beta}\mathbb{E}[h(S\vee R_1\vee\min(S+R_1,1/2))]
\nonumber\\
+\beta\mathbb{E}[h(\bar{S}\bar{R}_2+a(S)a(R_2)(\bar{S}\wedge \bar{R}_2-\bar{S}\bar{R}_2))]
\ge C\mathbb{E}[h(S)]
\label{e23}
\end{align}
whenever $\mathbb{E}[S]\le c'$,
$P_{ST}\in \mathcal{C}_1(P_S)$
$P_{SR_1}\in \mathcal{C}_2(P_S)$, $P_{SR_2}\in\mathcal{C}_3(P_S)$.
\end{lem}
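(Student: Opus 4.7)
The plan is to split the infimum implicit in \eqref{e23}: since $P_{ST}$, $P_{SR_1}$, $P_{SR_2}$ are constrained independently, the infimum over couplings decomposes as $\bar{\beta}\,G_{c'}(\mu)+\beta\,H_{c'}(\mu)$, where $G_{c'}(\mu)$ is the gap for \eqref{e_6} (i.e., $\inf_{P_{ST}, P_{SR_1}}\{\bar{\alpha}^*\mathbb{E}[h(\bar{S}\bar{T})]+\alpha^*\mathbb{E}[h(S\vee R_1\vee\min(S+R_1,1/2))]\}-\mathbb{E}[h(S)]$) and $H_{c'}(\mu):=\inf_{P_{SR_2}\in\mathcal{C}_3(\mu)}\mathbb{E}[h(\Pi^{(3)}_{S,R_2}(0,0))]-\mathbb{E}[h(S)]$. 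By Proposition~\ref{prop2} together with the Yu--Cambie analysis, $G_{c^*}(\mu)\ge 0$, with equality achieved only at $\mu=\mu^*$ (the distribution supported on $\{b^*,1\}$ with weights $(1-a^*,a^*)$) or at the trivial $\mu^{**}$ supported on $\{0,1\}$. I would then perturb around these two critical points and propagate by weak compactness.

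The core computation is to show $H_{c^*}(\mu^*)>0$ strictly. Because $\mu^*$ is supported on $\{b^*,1\}$, every element of $\mathcal{C}_3(\mu^*)$ takes the form $\int \lambda_q\otimes\lambda_q\,d\nu(q)$ with $\lambda_q=(1-q)\delta_{b^*}+q\delta_1$, subject to the marginal constraint $\int q\,d\nu(q)=a^*$. Since $\Pi^{(3)}_{s,t}(0,0)=0$ whenever $s$ or $t$ equals $1$, while $\Pi^{(3)}_{b^*,b^*}(0,0)=1/2$ by the choice of $a(b^*)$ in \eqref{e_23} (valid as $b^*\in(1-1/\sqrt{2},1/2]$), the third term evaluates to $\int (1-q)^2\,d\nu(q)\cdot h(1/2)$. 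Jensen's inequality forces this to be $\ge(1-a^*)^2$, with equality only at $\nu=\delta_{a^*}$ (the independent coupling). Using \eqref{e_bstar} to rewrite $\mathbb{E}[h(S)]=(1-a^*)h(b^*)=(1-a^*)^2 h(\bar{b}^{*2})$, I obtain $H_{c^*}(\mu^*)=(1-a^*)^2[1-h(\bar{b}^{*2})]>0$ since $\bar{b}^{*2}\neq 1/2$.

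To upgrade the pointwise positivity into the uniform inequality \eqref{e23}, I split the weakly compact set $\{\mu:\mathbb{E}[S]\le c^*\}$ into three regions. Near $\mu^{**}$, Lemma~\ref{lem3} yields $\mathbb{E}[h(\bar{S}\bar{T})]/\mathbb{E}[h(S)]\ge 2\bar{c^*}-o(1)$; since $\bar{\alpha}^*\cdot 2\bar{c^*}>1$ numerically, the first term of \eqref{e23} alone produces strict improvement when $\beta$ is small. Near $\mu^*$, weak lower semicontinuity of $H_{c^*}$ (which follows from the weak closure of $\mathcal{C}_3$ and the bounded continuity of the integrand) gives $H_{c^*}(\mu)\ge H_{c^*}(\mu^*)/2>0$, so the $\beta H_{c'}$ term closes the gap. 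On the compact complement of neighborhoods of $\mu^*$ and $\mu^{**}$, uniqueness of the Yu--Cambie optimizer together with lower semicontinuity of $G_{c^*}$ yields a uniform bound $G_{c^*}(\mu)\ge\epsilon_0\,\mathbb{E}[h(S)]$ for some $\epsilon_0>0$. Combining the three regions and choosing $\beta$ sufficiently small, I obtain some $C_0>1$ such that \eqref{e23} holds at $c'=c^*$ with constant $C_0$. A final continuity argument in $c'$ (the integrands and constraint sets depend continuously on $c'$) lets me push the threshold to $c'=c^*+\delta$ for some small $\delta>0$ while retaining some $C\in(1,C_0)$.

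The main obstacle is the uniqueness (up to $\mu^{**}$) of the Yu--Cambie optimizer $\mu^*$ for the equality case of Proposition~\ref{prop2} at $c^*$; this was established with numerical assistance in \cite{yu2023dimension}\cite{cambie2022better} and is inherited here. A secondary technicality is verifying the weak lower semicontinuity of $G_{c'}$ and $H_{c'}$ on their respective coupling classes, which should follow from boundedness and continuity of the integrands together with the closedness of $\mathcal{C}_1,\mathcal{C}_2,\mathcal{C}_3$ under weak convergence.
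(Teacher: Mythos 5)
Your overall strategy coincides with the paper's: isolate the nontrivial equality case $P_S^*$ of Proposition~\ref{prop2} (supported on $\{b^*,1\}$), show the conditionally i.i.d.\ term strictly exceeds $\mathbb{E}[h(S)]$ there, propagate by weak compactness, take $\beta$ small, and finally push the mean threshold past $c^*$. Your one genuinely different ingredient is the central computation: the paper argues qualitatively that the unique minimizer over $\mathcal{C}_2(P_S^*)$ fails positive semidefiniteness and hence lies outside the closed subset $\mathcal{C}_3(P_S^*)$, so the infimum over $\mathcal{C}_3(P_S^*)$ is strictly above $1$; you instead parametrize $\mathcal{C}_3(P_S^*)$ by mixtures $\int\lambda_q^{\otimes 2}\,\nu(dq)$ of two-point i.i.d.\ laws, use $\Pi^{(3)}_{b^*,b^*}(0,0)=\tfrac12$ (valid since $b^*\in(1-\tfrac1{\sqrt2},\tfrac12]$), $\Pi^{(3)}_{s,t}(0,0)=0$ when $s$ or $t$ equals $1$, Jensen, and \eqref{e_bstar} to get the explicit value $1/h(\bar{b}^{*2})>1$ for the ratio. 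This is cleaner and quantitative; you should just also justify that every element of $\mathcal{C}_3(P_S^*)$ is of that form, i.e.\ that the closed convex hull of symmetric rank-one measures equals the set of mixtures of i.i.d.\ laws (compactness of the space of mixing measures) and that a mixture supported on $\{b^*,1\}^2$ has $\nu$-a.e.\ component supported on $\{b^*,1\}$.

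Two of your justifications are looser than what is required. First, the equality cases of \eqref{e_6} at $c^*$ are not $\{\mu^*,\mu^{**}\}$ for a single trivial $\mu^{**}$: \emph{every} measure supported on $\{0,1\}$ with mean at most $c^*$ gives equality (all entropy terms vanish). Consequently your "compact complement of neighborhoods of $\mu^*$ and $\mu^{**}$" still contains sequences converging to trivial measures, along which both $G_{c^*}(\mu)$ and $\mathbb{E}[h(S)]$ tend to $0$, so lower semicontinuity of $G_{c^*}$ plus uniqueness cannot by itself yield the uniform ratio bound $G_{c^*}(\mu)\ge\epsilon_0\,\mathbb{E}[h(S)]$ on that region; you must run a minimizing-sequence argument in which \emph{any} trivial limit is dispatched by Lemma~\ref{lem3} and only nontrivial limits use lower semicontinuity and the uniqueness of the Yu--Cambie optimizer --- this is exactly how the paper proves its constant $C_\delta>1$. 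Second, the final extension from $c^*$ to some $c'>c^*$ is not a soft continuity-in-$c'$ statement: the infimum over $\{\mathbb{E}[S]\le c'\}$ could a priori collapse through near-trivial measures with mean slightly above $c^*$, and the paper handles this with one more minimizing-sequence argument (around \eqref{e32}) where trivial limits again go through Lemma~\ref{lem3}. Both repairs use only tools you already invoke (and your "choose $\beta$ sufficiently small" does correctly absorb the fact that your gap $H_{c'}(\mu)$ can be negative away from $\mu^*$, since the conditionally i.i.d.\ entropy term is nonnegative), so the argument is fixable, but as written these two steps are gaps rather than proofs.
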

\begin{proof}
Let $P_S^*$ be the two-point distribution where $\mathbb{P}[S=b^*]=1-a^*$ and $\mathbb{P}[S=1]=a^*$, with $a^*$ and $b^*$ defined around \eqref{e10}. 
Note that
\begin{align}
\inf_{P_{SR_2}\in\mathcal{C}_3(P_S^*)}\frac{\mathbb{E}[h(\Pi_{S,R_2}^{(3)}(0,0))]
}
{\mathbb{E}[h(S)]}
>1.
\label{e24}
\end{align}
Indeed, if the infimum on the left side of \eqref{e24} is over $P_{SR_2}\in\mathcal{C}_2(P_S^*)$ instead, then the infimum equals 1, achieved at $P_{SR_2}^*$ defined as 
\begin{align}
P_{SR_2}^*(1,1)&=0;
\\
P_{SR_2}^*(b^*,1)&=
P_{SR_2}^*(1,b^*)=a^*;
\\
P_{SR_2}^*(b^*,b^*)&=1-2a^*.
\end{align}
Since $P_{SR_2}^*\in P_{SR_2}$ belongs to $\mathcal{C}_2(P_S^*)$ but not $\mathcal{C}_3(P_S^*)$, and $\mathcal{C}_3(P_S^*)\subseteq \mathcal{C}_2(P_S^*)$, we see \eqref{e24} holds.

Next, we claim that there exists some $\delta>0$ such that 
\begin{align}
\inf_{P_S:\,W_1(P_S,P_S^*)<\delta}\,\inf_{P_{SR_2}\in\mathcal{C}_3(P_S)}\frac{\mathbb{E}[h(\Pi_{S,R_2}^{(3)}(0,0))]
}
{\mathbb{E}[h(S)]}
>1,
\label{e24}
\end{align}
where $W_1$ denotes the Wasserstein-1 distance.
Indeed, for any such $P_S$ in \eqref{e24} and for $ P_{SR_2}\in\mathcal{C}_3(P_S)$, we can construct a $P_{SR_2}^{**}\in\mathcal{C}_3(P_S^*)$ such that $W_1(P_{SR_2},P_{SR_2}^{**})\le 2\delta$,
by using the optimal transport (stochastic) map in the definition of the Wasserstein distance.
By choosing $\delta>0$ small enough we can also assume that $P_S$ is bounded away from measures supported on $\{0,1\}$, so that the denominator in \eqref{e24} is bounded away from 0.
Then the claim follows by the continuity of the numerator and denominator in \eqref{e24} with respect to weak convergence (equivalently, with respect to $W_1$).

Now define 
\begin{align}
g(P_S):=\inf_{P_{SR_2}}\frac{\bar{\alpha}^*\mathbb{E}[h(\Pi_{S,T}^{(1)}(0,0))]+\alpha^*\mathbb{E}[h(\Pi_{S,R_1}^{(2)}(0,0))]}
{\mathbb{E}[h(S)]}
\end{align}
where $T$ is an i.i.d.\ copy of $S$.
Set
\begin{align}
C_{\delta}:=\inf_{P_S:\,W_1(P_S,\,P_S^*)\ge \delta,\,\mathbb{E}[S]\le c^*}
g(P_S).
\end{align}
Then $C_{\delta}>1$ for any $\delta>0$. 
Indeed, suppose that $P_S^n$ is a weakly convergent sequence satisfying $W_1(P_S^n,P_S^*)\ge\delta$, 
$\mathbb{E}_{P_S^n}[S]\le c^*$, and $\lim_{n\to\infty}g(P_S^n)=C_{\delta}$.  
If $P_S':=\lim_{n\to\infty}P_S^n$ is supported on $\{0,1\}$, then by Lemma~\ref{lem3} we have $C_{\delta}\ge 2\bar{c}^*\bar{\alpha}^*>1$;
otherwise, $C_{\delta}=g(P_S')$ and $\mathbb{E}_{P_S'}[h(S)]>0$,
and we conclude from the condition of strict inequality in Proposition~\ref{prop2} that $C_{\delta}>1$. 

Now we can set $\beta\in(0,1)$ as any number satisfying $\bar{\beta}C_{\delta}>1$.
Since Proposition~\ref{prop2} showed
\begin{align}
\inf_{P_S:\,\mathbb{E}[S]\le c^*}g(P_S)\ge 1,
\end{align}
together with \eqref{e24} we have established 
\begin{align}
\inf_{P_S:\, W_1(P_S,P_S^*)<\delta,\,
\mathbb{E}[S]\le c^*}f(\beta,P_S)> 1
\end{align}
where 
\begin{align}
&f(\beta,P_S):=
\nonumber\\
&\inf_{P_{SR_1}\in\mathcal{C}_2(P_S),P_{SR_2}\in\mathcal{C}_3(P_S)}\tfrac{\bar{\alpha}^*\bar{\beta}\mathbb{E}[h(\Pi_{S,T}^{(1)}(0,0))]+\alpha^*\bar{\beta}\mathbb{E}[h(\Pi_{S,R_1}^{(2)}(0,0))]
+\beta\mathbb{E}[h(\Pi_{S,R_2}^{(3)}(0,0))]}
{\mathbb{E}[h(S)]}.
\label{e30}
\end{align}
Combining \eqref{e30} with $\bar{\beta}C_{\delta}>1$, we actually have 
\begin{align}
\inf_{P_S:\,
\mathbb{E}[S]\le c^*}f(\beta,P_S)> 1.
\label{e31}
\end{align}

Finally, suppose that $P_S^n$ is a weakly convergent sequence that such that $\mathbb{E}_{P_S^n}[S]=c^*+\frac1{n}$ and 
\begin{align}
\lim_{n\to\infty}f(\beta,P_S^n)
\le 
\lim_{n\to\infty}\inf_{P_S:\,
\mathbb{E}[S]\le c^*+\frac1{n}}f(\beta,P_S).
\label{e32}
\end{align}
If $P_S^n$ converges to a probability measure supported on $\{0,1\}$, we conclude from Lemma~\ref{lem3} that 
$\lim_{n\to\infty}f(\beta,P_S^n)\ge 2\bar{c}^*$;
otherwise, $\lim_{n\to\infty}f(\beta,P_S^n)=f(\beta,\lim_{n\to\infty}P_S^n)>1$ by \eqref{e31}. 
We therefore established that the right side of \eqref{e32} is strictly larger than 1, which is the claim of the lemma.
\end{proof}
\begin{rem}\label{rem_mc}
Yu \cite{yu2023dimension} considered $\Pi$ that ensures the maximal correlation coefficient of the binary pair distribution $\Pi_{s,t}$ is upper bounded by a given $\rho\in[0,1]$ for any $(s,t)$.
Then by a tensorization property, it follows that $\mathcal{C}$ can be taken to be the set of distributions on $[0,1]^2$ with maximal correlation upper bounded by $\rho$.
The cases of $\rho=0$ and $\rho=1$ reduces to the i.i.d.\ coupling and the max-entropy coupling, respectively.
A natural question is whether we can pick some $\rho\in(0,1)$ and apply a similar argument as Lemma~\ref{lem8} to show strict improvement on $c^*$.
The answer is positive only if for some $\rho\in(0,1)$,
\begin{align}
P_{ST}(b^*,b^*)h(\Pi_{b^*b^*}(0,0))
-(1-a^*)h(b^*)>0,\label{e_44}
\end{align}
where $P_{ST}$ is the coupling of $P_S=P_T$ defined by $P_S(b^*)=1-a^*$ and $P_S(1)=a^*$ with maximal correlation upper-bounded by $\rho$ such that $P_{ST}(b^*,b^*)$ is minimized, and 
$\Pi_{b^*b^*}$ is the coupling of two Bernoulli $b^*$ distributions with maximal correlation upper-bounded by $\rho$ such that $h(\Pi_{b^*b^*}(0,0))$ is maximized.
Since the maximal correlation of discrete distributions can be computed as the second singular value of a matrix,
we have $\Pi_{b^*b^*}(0,0)=\min\{\bar{b}^{*2}+b^*\bar{b}^*\rho,0.5\}$ and $P_{ST}(b^*,b^*)=\max\{\bar{a}^{*2}-a^*\bar{a}^*\rho,1-2\bar{a}^*\}$ \cite{yu2023dimension}.
Then we can verify that the left side of \eqref{e_44} is, in fact, negative for all $\rho\in(0,1)$.
\end{rem}

\section{Cardinality Reduction}\label{sec_reduction}
In this section we remove the max-entropy coupling term in \eqref{e23}, and simplify \eqref{e42} to a finite-dimension optimization under certain assumptions (Theorem~\ref{thm12} ahead), which enables numerical evaluation of the bound.
As before, let $\mathcal{C}_1$ be the set of measures $P_{SR}$ on $[0,1]^2$ under which $S$ and $R$ are i.i.d., and let $\mathcal{C}_3$ be the closure of the convex hull of $\mathcal{C}_1$.
We focus on the following optimization:
\begin{align}
\inf_{P_{SR_2}\in\mathcal{C}_3,\,\mathbb{E}[S]\le c}
\left\{\bar{\beta}\mathbb{E}[h(\bar{S}\bar{T})]
+\beta\mathbb{E}[h(\Pi_{S,R_2}(0,0))]-
\mathbb{E}[h(S)]
\right\}
\label{e42}
\end{align}
where $c>0$, and $S$ and $T$ are i.i.d. 

\begin{thm}\label{thm7}
For any conditionally IID protocol $\Pi$ as in Definition~\ref{def_ciid},
the infimum in \eqref{e42} is achieved by some $P_{SR_2}$ which is a mixture of two i.i.d.\ distributions, i.e.,
\begin{align}
P_{SR_2}(s,r)=\mathbb{E}[P_{S|W}(s|W)P_{S|W}(r|W)] 
\label{e43}
\end{align}
for some binary random variable $W$ and conditional distribution $P_{S|W}$.
\end{thm}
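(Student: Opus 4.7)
The plan is to split the problem into an outer optimization over the marginal $P_S$ and an inner linear optimization over the mixing measure $\mu$, and then reduce the inner one to a two-point mixture via KKT together with a Carath\'eodory-type argument. First, I would establish existence of a minimizer: the set $\mathcal{C}_3\cap\{P_{SR_2}:\mathbb{E}[S]\le c\}$ is weak-$*$ compact (closed and sitting inside the weak-$*$ compact $\mathcal{P}([0,1]^2)$), while the objective
\[
G(P_{SR_2}) := \bar{\beta}\,\mathbb{E}[h(\bar{S}\bar{T})]+\beta\,\mathbb{E}[h(\Pi_{S,R_2}(0,0))]-\mathbb{E}[h(S)]
\]
is weak-$*$ continuous, being an integral of bounded continuous functions against $P_S\otimes P_S$ and $P_{SR_2}$. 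Hence the infimum is attained at some $P_{SR_2}^*$; and since $\mathcal{C}_3=\overline{\mathrm{conv}}\{Q\otimes Q:Q\in\mathcal{P}([0,1])\}$, Choquet's theorem gives the representation $P_{SR_2}^*=\int Q\otimes Q\,d\mu^*(Q)$ for some $\mu^*\in\mathcal{P}(\mathcal{P}([0,1]))$.

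Next, I would observe the key structural feature: setting $A(P_S):=\iint h(\bar{s}\bar{t})\,P_S(ds)P_S(dt)$, $L(P_S):=\int h(s)\,P_S(ds)$, and $B'(Q):=\iint h(\Pi_{s,r}(0,0))\,Q(ds)Q(dr)$, the objective equals $\bar{\beta}A(P_S)-L(P_S)+\beta\int B'(Q)\,d\mu(Q)$, and both the nonlinear pieces and the mean constraint $\mathbb{E}[S]\le c$ depend on $\mu$ only through the marginal $P_S=\int Q\,d\mu$. So, given the optimal marginal $P_S^*$, the mixing measure $\mu^*$ must solve the inner linear program of minimizing $\mu\mapsto\int B'(Q)\,d\mu(Q)$ subject to $\int Q\,d\mu=P_S^*$; a linear functional over the convex weak-$*$ compact feasible set of such $\mu$ attains its minimum at an extreme point.

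The final and crucial step is to show that some optimizer of this inner linear program has support of size at most two. Applying Lagrangian duality with a multiplier function $\psi^*:[0,1]\to\mathbb{R}$ absorbing both the marginal and mean constraints, the KKT conditions yield a scalar $\nu^*$ with $F(Q):=\int\psi^*(s)\,Q(ds)+\beta B'(Q)\ge\nu^*$ for all $Q\in\mathcal{P}([0,1])$ and $F(Q)=\nu^*$ on $\mathrm{supp}(\mu^*)$. Any two points $Q_0,Q_1$ in the active set $\{F=\nu^*\}$ together with a weight $\pi\in[0,1]$ satisfying $\pi Q_0+(1-\pi)Q_1=P_S^*$ produce a two-point mixture $\mu'=\pi\delta_{Q_0}+(1-\pi)\delta_{Q_1}$ with the same marginal and the same $B'$-integral, hence the same objective value; the corresponding $P_{SR_2}'=\pi\,Q_0\otimes Q_0+(1-\pi)\,Q_1\otimes Q_1$ is then of the form \eqref{e43} with $W$ binary, and attains the infimum. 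Producing such $Q_0,Q_1$ is the main obstacle: since $B'$ is a quadratic functional, $\{F=\nu^*\}$ is a quadric hypersurface in $\mathcal{P}([0,1])$, and the required two-chord representation of $P_S^*$ hinges on connectivity of this active set within the probability simplex. I anticipate that the cleanest route is to first extract a finitely supported extreme optimizer $\mu^*$ of the inner linear program and then apply an atom-merging perturbation that preserves both $\int Q\,d\mu$ and $\int B'(Q)\,d\mu$ to iteratively reduce the support to two atoms, rather than attacking the convex-hull problem for the quadric directly.
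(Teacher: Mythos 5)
Your setup is fine as far as it goes (existence by compactness, the Choquet representation $P_{SR_2}=\int Q\otimes Q\,d\mu$, and the observation that the $\Pi$-term is linear in the mixing measure $\mu$), but there is a genuine gap at the decisive step, and you flag it yourself. After fixing the entire optimal marginal $P_S^*$, you must show that the inner linear program $\min_\mu \int B'(Q)\,d\mu(Q)$ subject to $\int Q\,d\mu=P_S^*$ admits an optimizer supported on at most two components, and neither of your suggested routes delivers this. The constraint $\int Q\,d\mu=P_S^*$ is an infinite-dimensional family of linear constraints, so KKT/Carath\'eodory counting gives no bound on the support: extreme points of $\{\mu:\int Q\,d\mu=P_S^*\}$ can have arbitrarily large (even infinite) support --- the representation of $P_S^*$ as a mixture of Dirac components is already such an extreme point --- and the minimizer of a linear functional may well be one of these (e.g.\ for a quadratic functional of the form $B'(Q)=-(\int\phi\,dQ)^2$ the inner minimum is attained exactly by the all-Dirac mixture). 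Your ``atom-merging perturbation preserving $\int Q\,d\mu$ and $\int B'\,d\mu$'' runs into the same obstruction: merging atoms generically changes the barycenter measure and there are not enough free directions to restore it; and the existence of a chord $\pi Q_0+(1-\pi)Q_1=P_S^*$ inside the active set is precisely what remains unproven. Note also that the theorem only asserts a two-mixture optimizer for the full problem; within your decomposition it becomes the stronger claim that the fixed-marginal subproblem has a two-atom solution, which does not follow from your argument.

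The paper avoids this by never fixing the marginal: it fixes only the scalar mean $\mathbb{E}[S]=c'$ and uses the concavity, taken from \cite{alweiss2022improved}, of $P_S\mapsto\mathbb{E}[h(\bar{S}\bar{T})]$ on the mean-fixed slice; since the other two summands of \eqref{e42} are linear in $P_{SR_2}$, the whole objective is concave on $\mathcal{C}_3\cap\{\mathbb{E}[S]=c'\}$, hence minimized at an extreme point of that slice (Krein--Milman). It then identifies the extreme points of this slice as mixtures of at most two i.i.d.\ laws by decomposing the law of the component means $\mathbb{E}[S|W]$ --- a real random variable constrained by a single moment, whose extreme-point distributions are supported on at most two points --- and pulling that decomposition back to $P_{SR_2}$. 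In your scheme the single moment constraint is replaced by the entire marginal, which is exactly why the two-point reduction becomes unreachable; to repair your proof you would have to relax from fixing $P_S^*$ to fixing only $\mathbb{E}[S]$ and then control the nonlinear terms $A(P_S)$ and $L(P_S)$ via concavity and linearity on that slice, i.e.\ reproduce the paper's argument.
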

\begin{proof}
The fact that the infimum is achievable follows from the weak compactness of measures (Prokhorov theorem).
For any $c'>0$, the map from $\mathcal{C}_3\cap\{P_{SR_2}\colon \mathbb{E}[S]=c'\}$ to $\mathbb{E}[h(\bar{S}\bar{T})]$ is concave, as shown in \cite{alweiss2022improved}, 
so is its composition with  the linear map from $P_{SR_2}$ to $P_S$.
The last two summands in \eqref{e42} are linear in $P_{SR_2}$. 
Therefore \eqref{e42} is achieved at the extreme points of $\mathcal{C}_3\cap\{P_{SR_2}\colon \mathbb{E}[S]=c'\}$ for some $c'>0$.

It remains to show that the extreme points are mixtures of two i.i.d.\ distributions, from which the theorem follows by Krein-Milman.
For any $P_{SR_2}\in\mathcal{C}_3$
we can write $P_{SR_2}$ by \eqref{e43} for some (not necessarily binary) random variable $W$.
We can assume that $W$ has finite support, as the general case will then follow by a limiting argument.
Then $\mathbb{E}[S|W]$ is a random variable on $\mathbb{R}$ with mean equal to $c'$.
The extreme points in the set of probability measures on $\mathbb{R}$ with mean equal to $c'$ are mixtures of two delta measures with mean equal to $c'$,  
Therefore we can express the distribution of $\mathbb{E}[S|W]$ as a convex combination of mixture of two delta measures, and hence express $P_{SR_2}$ as a convex combination of mixtures of two i.i.d.\ measures with mean equal to $c'$.
This establishes the claim about extreme points and hence the theorem statement.
\end{proof}

Theorem~\ref{thm7} implies that it is sufficient to consider
\begin{align}
P_{SR_2}(s,r)=qP_1(s)P_1(r)
+\bar{q}P_0(s)P_0(r)
\label{e44}
\end{align}
where $q\in[0,1]$ and $P_0$ and $P_1$ are probability measures on $[0,1]$.
Further simplification is possible in some settings.
The idea is to use convexity of certain functionals, which in turn relies on the positive semidefiniteness of certain quadratic forms.
First, we observe the following about some matrices generated by polynomials.

\begin{lem}\label{lem10}
Let $p(\cdot)$ and $p_1(\cdot,\cdot)$ be given polynomials. For positive integer $k$, let $B_k$ be the matrix where the $(i,j)$-th entry equals the coefficient of the monomial $x^iy^j$ in the expansion of $p_1(x,y)p^k(x)p^k(y)$.
Then there exists $C>0$ such that the operator norm of $B_k$ is upper bounded by $C^k$ for all $k$.
\end{lem}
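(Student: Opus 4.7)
My plan is to reduce $B_k$ to a sum of shifted rank-one matrices built from the coefficient vector of $p^k(x)$, and then bound its operator norm via the triangle inequality together with simple $\ell^1$/$\ell^2$ estimates. Write $p(x)=\sum_i a_i x^i$ and $p_1(x,y)=\sum_{a,b} d_{a,b}\, x^a y^b$ (both finite sums), and let $c^{(k)}=(c_i^{(k)})_i$ denote the coefficient sequence of $p^k(x)$, which is the $k$-fold convolution of $(a_i)$ with itself. The Banach-algebra inequality $\|f*g\|_1\le \|f\|_1\|g\|_1$ immediately gives $\|c^{(k)}\|_2\le \|c^{(k)}\|_1\le \|a\|_1^{k}$.

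Expanding directly, the coefficient of $x^m y^n$ in $p_1(x,y)p^k(x)p^k(y)$ equals $\sum_{a,b} d_{a,b}\, c^{(k)}_{m-a}\, c^{(k)}_{n-b}$, using the convention $c^{(k)}_j=0$ for $j<0$. Hence
\begin{equation*}
B_k \;=\; \sum_{a,b} d_{a,b}\, v^{(a)}\bigl(v^{(b)}\bigr)^{\top},
\end{equation*}
where $v^{(a)}$ is the vector obtained from $c^{(k)}$ by prepending $a$ zeros (embedded into any sufficiently large $\mathbb{R}^M$). Each summand is a rank-one matrix with operator norm exactly $\|v^{(a)}\|_2\|v^{(b)}\|_2=\|c^{(k)}\|_2^{2}$, since zero-padding preserves the $\ell^2$ norm. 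The triangle inequality then yields
\begin{equation*}
\|B_k\|\;\le\;\Bigl(\sum_{a,b}|d_{a,b}|\Bigr)\|c^{(k)}\|_2^{2}\;\le\;\|p_1\|_{\ell^1}\,\|a\|_1^{2k},
\end{equation*}
from which $\|B_k\|\le C^{k}$ follows by taking, for example, $C=\max(1,\|p_1\|_{\ell^1})\cdot\|a\|_1^{2}$.

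There is no serious obstacle here; the argument is essentially routine linear algebra once one observes that the coefficient matrix of $p^k(x)p^k(y)$ is the single rank-one outer product $c^{(k)}\bigl(c^{(k)}\bigr)^{\top}$, and that multiplication by $p_1(x,y)$ only shifts and reweights that rank-one block. The only two ingredients are the submultiplicativity of the $\ell^1$-norm under convolution and the shift-invariance of the $\ell^2$-norm, both standard. If any subtlety needs attention, it is merely in checking that the operator norm of a shifted rank-one outer product equals the product of the $\ell^2$ norms of the two shifted vectors—which reduces to the fact that embedding into a zero-padded ambient space does not change $\ell^2$ norms.
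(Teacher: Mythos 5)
Your proof is correct, and its skeleton matches the paper's: both reduce to the observation that for a single monomial of $p_1$ the coefficient matrix of $p_1(x,y)p^k(x)p^k(y)$ is a (shifted) rank-one outer product of the coefficient vector of $p^k$ with itself, and both then sum over the monomials of $p_1$ by the triangle inequality (the paper calls this ``subadditivity''). The only substantive difference is how the norm of that rank-one block, i.e. $\|c^{(k)}\|_2^2$, is controlled: the paper evaluates $\sum_m (a^{(k)}_m)^2$ asymptotically via a large-deviations/entropy expression $\exp\bigl(2k\sup_{P_M}\{\mathbb{E}[\log a_M]+H(P_M)\}+o(k)\bigr)$, whereas you use the elementary chain $\|c^{(k)}\|_2\le\|c^{(k)}\|_1\le\|a\|_1^k$ coming from submultiplicativity of the $\ell^1$ norm under convolution. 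Your route is simpler and arguably more robust: the paper's asymptotic formula, as written, implicitly requires something like nonnegativity of the coefficients $a_m$ (otherwise $\log(a_M)$ is undefined and cancellations would have to be discussed), while your bound needs no such care and produces an explicit constant $C=\max(1,\|p_1\|_{\ell^1})\,\|a\|_1^2$. Both arguments give exactly the exponential-in-$k$ growth the lemma asserts, which is all that is used downstream; the paper's computation would identify a sharper exponential rate when the coefficients are nonnegative, but no sharpness is needed. (One immaterial edge case: if $p\equiv 0$ your $C$ is $0$, violating $C>0$, but then $B_k=0$ and any positive $C$ works.)
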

\begin{proof}
Suppose that $p(x)=\sum_{m=0}^Da_mx^m$,
and $p^k(x)=\sum_{m=0}^{Dk}a^{(k)}_mx^m$.
Let us focus on the case of $p_1(\cdot,\cdot)=1$, as the case of monomial $p_1$ will then follow with exactly the same spectral norm, and then the general $p_1$ case will follow by subadditivity. 
Then $B_k$ is a symmetric rank-one matrix, 
and from the large deviation analysis,
the square of its operator norm is 
\begin{align}
\sum_{m=1}^{Dk}(a^{(k)}_m)^2
=
\exp\left(2k\sup_{P_M}\left\{\mathbb{E}[\log(a_M)]+H(P_M)\right\}+o(k)\right)
\end{align}
where the supremum is over $P_M$ a distribution on $\{0,1,\dots, D\}$.
Therefore the operator norm grows at most exponentially in $k$.
\end{proof}
Note that if $p_1(x,y)p^k(x)p^k(y)$ is a symmetric polynomial whose max degree in $x$ is $L$, then Lemma~\ref{lem10} implies that
\begin{align}
\int p_1(x,y)p^k(x)p^k(y)\gamma(x)\gamma(y)dxdy
\le C^k(\sum_{i=0}^L\int x^i \gamma(x))^2
\end{align}
for any $\gamma(\cdot)$.
We can use this fact to establish the following:
\begin{lem}\label{lem9}
Fix $p(\cdot)$ a polynomial satisfying $p(1)=0$.
Consider $\Pi$ in Example~\ref{ex2}, with $f(x)=lxp(x)$.
For sufficiently small $l>0$, the following holds:
For any $c,d\in[0,1]$, the map $\mu\mapsto \int h(\Pi_{s,r}(0,0))\mu(ds)\mu(dt)$ restricted to the set of probability measures satisfying
\begin{align}
\mathbb{E}_{\mu}(S)&=c;
\label{e49}
\\
\mathbb{E}_{\mu}[f(\bar{S})]&=d,
\label{e50}
\end{align}
is concave.
\end{lem}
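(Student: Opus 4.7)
Concavity of the functional on the affine constraint set is equivalent to nonpositivity of the second variation
\[
Q(\nu):=\int h\bigl(\bar s\bar r+f(\bar s)f(\bar r)\bigr)\,\nu(ds)\,\nu(dr)
\]
for every signed $\nu$ in the tangent cone, i.e.\ every $\nu$ with $\int\nu=0$, $\int s\,\nu=0$, and $\int\bar s p(\bar s)\,\nu=0$ (the last being equivalent to $\int f(\bar s)\,\nu=0$ for $l>0$). Setting $\lambda:=l^2$, $u_0:=\bar s\bar r$, $P:=p(\bar s)p(\bar r)$, so that the argument of $h$ is $u_0(1+\lambda P)$, the plan is to expand $Q(\nu)$ as a convergent power series in $\lambda$ and show that all potentially positive contributions are dominated, for $l$ small, by the leading negative $\lambda^0$ term via Lemma~\ref{lem10}. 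The hypothesis $p(1)=0$ forces $|\lambda u_0 P/(1-u_0)|$ to be uniformly small on $[0,1]^2$ for small $l$ (since $|p(\bar s)p(\bar r)|\lesssim sr$ and $1-u_0\ge s+r-sr$), which justifies term-by-term integration of both series below.

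\textbf{Series expansion.} Decompose $h(u)=-u\log_2 u+g(u)$ with $g(u):=-(1-u)\log_2(1-u)$. Since $\log_2 u_0=\log_2\bar s+\log_2\bar r$, every piece of the expansion that carries a factor of $\log_2 u_0$ factors through a linear functional $\int\bar s\,\nu$ or $\int\bar s p(\bar s)\,\nu$ after integration against $\nu\otimes\nu$, and hence vanishes by the tangent conditions. Applying the identity $(1+y)\log_2(1+y)=\frac{1}{\ln 2}\bigl[y+\sum_{k\geq 2}(-1)^k y^k/(k(k-1))\bigr]$ with $y=\lambda P$, together with the Taylor series of $g$ about $u_0$ (whose coefficients $g^{(k)}(u_0)/k!=-1/(k(k-1)(1-u_0)^{k-1}\ln 2)$ are all negative for $k\geq 2$), and writing $M_k:=\int\bar s^k\nu$, $A_k:=\int\bar s p^k(\bar s)\nu$, $B_{k,m}:=\int\bar s^{k+m}p^k(\bar s)\nu$, integration yields
\begin{align*}
Q(\nu)=&-\frac{1}{\ln 2}\sum_{k\geq 2}\frac{M_k^2}{k(k-1)}
-\frac{\lambda}{\ln 2}\sum_{j\geq 1}\frac{1}{j}\Bigl(\int\bar s^{j+1}p(\bar s)\nu\Bigr)^{\!2}\\
&+\sum_{k\geq 2}\frac{\lambda^k}{k(k-1)\ln 2}\left[(-1)^{k+1}A_k^2-\sum_{m\geq 0}\binom{m+k-2}{m}B_{k,m}^2\right].
\end{align*}
The $\lambda^0$ sum is the Alweiss--Liu--Sawhney concavity term from \cite{alweiss2022improved} and is $\leq 0$; the $\lambda^1$ sum is $\leq 0$; and within each $\lambda^k$ bracket every $B_{k,m}^2$ term is negative, as is the $A_k^2$ term when $k$ is even. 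The only potentially positive contributions are the isolated rank-one terms $\frac{\lambda^k A_k^2}{k(k-1)\ln 2}$ for odd $k\geq 3$.

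\textbf{Dominating the positive tail and conclusion.} Writing $\bar s p^k(\bar s)=\sum_{j=0}^{D_k}c_j^{(k)}\bar s^j$ with $D_k=1+k\deg p$, the bivariate polynomial $\bar s p^k(\bar s)\cdot\bar r p^k(\bar r)$ has rank-one coefficient matrix $c^{(k)}(c^{(k)})^{\top}$, so Lemma~\ref{lem10} applied with $p_1(x,y)=xy$ gives $\|c^{(k)}\|_2^2\leq C^k$ for some constant $C$ depending only on $p$. Since $c_j^{(k)}=0$ for $j>D_k$, the bound $j(j-1)\leq D_k^2\leq C_1 k^2$ on the support of $c^{(k)}$ yields $W_k:=\sum_{j\geq 2}j(j-1)(c_j^{(k)})^2\leq C_1 k^2 C^k$. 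Because $M_0=M_1=0$, weighted Cauchy--Schwarz with weights $j(j-1)$ gives $A_k^2=(\sum_{j\geq 2}c_j^{(k)}M_j)^2\leq W_k\cdot S$ where $S:=\sum_{j\geq 2}M_j^2/(j(j-1))$. Summing over odd $k\geq 3$,
\[
\sum_{k\geq 3,\text{ odd}}\frac{\lambda^k A_k^2}{k(k-1)\ln 2}\;\leq\;\frac{C_2(l^2 C)^3}{(1-l^2C)\ln 2}\,S,
\]
while the $\lambda^0$ contribution is exactly $-S/\ln 2$. For $l$ small enough that $C_2(l^2C)^3/(1-l^2C)<1$, the positive tail is strictly dominated and $Q(\nu)\leq 0$, proving concavity. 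The main obstacle is precisely the alternating-sign pattern of the Taylor coefficients of $(1+y)\log_2(1+y)$, which leaves a positive rank-one residue $A_k^2$ at each odd order of $\lambda$; Lemma~\ref{lem10} is exactly what converts the $C^k$ bound on coefficient norms into the geometric $(l^2C)^k$ decay needed to absorb the entire positive tail into the $\lambda^0$ negative term.
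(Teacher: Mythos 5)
Your proof is correct, but it follows a genuinely different route from the paper's. The paper integrates by parts twice, rewriting $-\int h(I)\mu(dx)\mu(dy)$ as an integral against $\gamma(x)\gamma(y)$ with $\gamma(x)=\mu([0,x])$, splits the integrand into $I_{xy}\log I$, $I_{xy}\log\frac{1}{1-I}$ and $\frac{I_xI_y}{I(1-I)}$, and extracts the dominating \emph{positive} term $\tfrac14\sum_k(\int x^k\gamma)^2$ from the last piece while showing the other pieces contribute errors of order $-O(l^2)\sum_k(\int x^k\gamma)^2$, with Lemma~\ref{lem10} controlling the operator norms of the coefficient matrices arising in the expansions. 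You avoid integration by parts entirely: working directly with the signed measure $\nu$ and its moments $M_k$, you use the two linear constraints to annihilate every term carrying $\log(\bar s\bar r)$, identify the $\lambda^0$ contribution as exactly the negative-definite i.i.d.\ form $-\frac{1}{\ln 2}\sum_{k\ge2}\frac{M_k^2}{k(k-1)}$ (which also re-derives the concavity of \cite{alweiss2022improved} rather than quoting it), observe that all remaining contributions are sums of negative squares except the rank-one residues $A_k^2$ at odd orders $k\ge3$ (i.e.\ $O(l^6)$, a sharper accounting than the paper's $O(l^2)$ error budget), and absorb those by weighted Cauchy--Schwarz against the $\lambda^0$ term, using Lemma~\ref{lem10} only in its rank-one instance $p_1(x,y)=xy$ (where in fact the trivial bound $\|c^{(k)}\|_2\le\|c^{(k)}\|_1\le C^k$ would already suffice). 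Your justification of term-by-term integration is terse, but the stated reason is the right one: $p(1)=0$ gives $|p(\bar s)p(\bar r)|\lesssim(1-\bar s)(1-\bar r)\le 1-\bar s\bar r$, so $|\lambda u_0P/(1-u_0)|$ is uniformly small and the partial sums admit a constant dominating function, including at the corner $\bar s=\bar r=1$. Both arguments need $l$ small with a threshold depending only on $p$, as the lemma requires; your expansion additionally makes transparent exactly where positivity could fail, which the paper's three-way split somewhat obscures.
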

\begin{proof}
Note that for sufficiently small $l$, $0\le f(\bar{s})\le s\wedge\bar{s}$ is satisfied and so $\Pi$ is a well-defined protocol.
For notation simplicity, we write $x:=\bar{s}$ and $y:=\bar{t}$, and $I:=\Pi_{s,r}(0,0)=xy+f(x)f(y)$.
The goal can be rephrased as showing $ -\int h(I)\mu(dx)\mu(dy)\ge 0$ for measures $\mu$ on $[0,1]$ satisfying
$\int d\mu=0$, $\int x\mu(dx)=0$ and $\int f(x)\mu(dx)=0$.

Similar to \cite{alweiss2022improved}, we apply integration by parts twice
to obtain
\begin{align}
-\int h(I)\mu(dx)\mu(dy)
&=-\int \partial_x\partial_yh(I)\gamma(x)\gamma(y)dxdy
\\
&=\int\left(
I_{xy}\log\frac{I}{1-I}
+\frac{I_xI_y}{I(1-I)}
\right)
\gamma(x)\gamma(y)dxdy
\label{e48}
\end{align}
where $\gamma(x):=\mu([0,x])$ and $I_x$ denotes the derivative of $I$ in $x$. 
We then analyze the terms in \eqref{e48} separately to show the nonnegativity of \eqref{e48}.

First, 
\begin{align}
A_1&:=\int I_{xy}(\log I)\gamma(x)\gamma(y)
\\
&=\int(1+f'(x)f'(y))\log(xy+f(x)f(y))\gamma(x)\gamma(y)
\\
&=\int \log(xy)\gamma(x)\gamma(y)
+\int f'(x)f'(y)\log(xy)\gamma(x)\gamma(y)
\nonumber\\
&\quad+\int(1+f'(x)f'(y))\log\left(1+\frac{f(x)f(y)}{xy}\right)\gamma(x)\gamma(y).
\label{e51}
\end{align}
Now $\int\log (y)\gamma(x)\gamma(y)=-\int \log (y)\gamma(y)\int d\mu=0$ by integration by parts.
Similarly, $\int f'(x)f'(y)\log(y)\gamma(x)\gamma(y)=-\int f'(y)\log(y)\gamma(y)\int f(x)\mu(dx)=0$.
The third term in \eqref{e51} can be Taylor expanded as 
\begin{align}
\int (1+l^2(xp(x))'(yp(y))')
\sum_{k=1}^{\infty}\frac{(-1)^{k+1}}{k}
l^{2k}p^k(x)p^k(y)\gamma(x)\gamma(y),
\end{align}
which, by Lemma~\ref{lem10}, is lower bounded by 
\begin{align}
&\quad-\sum_{k=1}^{\infty}
l^{2k}\sum_{m=0}^{(k+1)D}C^k\left(\int x^m\gamma(x)\right)^2
\nonumber\\
&\ge -\sum_{m=0}^{\infty}\frac{l^2C}{1-l^2C}\left(\int x^m\gamma(x)\right)^2
\\
&\ge -2Cl^2\sum_{m=0}^{\infty}\left(\int x^m\gamma(x)\right)^2
\label{e_54}
\end{align}
for sufficiently small $l>0$ (by which we mean $l$ is smaller than some positive threshold depending on $p(\cdot)$),
where $D$ denotes the degree of $p(\cdot)$, and $C>0$ depends only on $p(\cdot)$.

Second, 
\begin{align}
A_2&:=\int I_{xy}\log\frac1{1-I}\gamma(x)\gamma(y)
\\
&=\int (1+l^2(xp(x))'(yp(y))')\sum_{k=1}^{\infty}\frac1{k}(xy+f(x)f(y))^k\gamma(x)\gamma(y)
\label{e54}
\\
&=\sum_{k=1}^{\infty}\int F_k(x)F_k(y)\gamma(x)\gamma(y)\ge 0
\end{align}
where $F_k(\cdot)$ are certain polynomials arising from applying binomial expansion to \eqref{e54}.

Third,
\begin{align}
A_3&:=\int\frac{I_xI_y}{I(1-I)}
\gamma(x)\gamma(y)
\nonumber\\
&=\int\frac{I_xI_y}{I}\gamma(x)\gamma(y)
+\int I_xI_y\sum_{k=0}^{\infty}I^k\gamma(x)\gamma(y).
\label{e57}
\end{align}
Denote by $A_{31}$ and $A_{32}$ the two integrals in \eqref{e57},
and define $q(x):=(xp(x))'$. 
We have
\begin{align}
A_{31}&=
\int\frac{(1+l^2q(x)p(y))(1+l^2p(x)q(y))}{1+l^2p(x)p(y)}\gamma(x)\gamma(y)
\\
&=\int \gamma(x)\gamma(y)
\nonumber\\
&+\int (l^2q(x)p(y)+l^2p(x)q(y)+l^4q(x)q(y)p(x)p(y))\gamma(x)\gamma(y)
\nonumber\\
&+\int(1+l^2q(x)p(y))(1+l^2p(x)q(y))\sum_{k=1}^{\infty}(-1)^kl^{2k}p^k(x)p^k(y)\gamma(x)\gamma(y)
\label{e59}
\end{align}
Denote the 3 integrals in \eqref{e59} by $A_{311}$, $A_{312}$ and $A_{313}$, respectively.
We have
\begin{align}
A_{312}\ge -l^2C\sum_{m=0}^{2D+1}(\int x^m\gamma(x))^2
\end{align}
and similarly to \eqref{e54},
\begin{align}
A_{313}\ge -2l^2C\sum_{m=0}^{\infty}\left(\int x^m\gamma(x)\right)^2
\end{align}
for some $C>0$ depending on $p(\cdot)$.
Next,
\begin{align}
A_{32}&=\int xy(1+l^2q(x)p(y))(1+l^2p(x)q(y))
\nonumber\\
&\cdot\sum_{k=0}^{\infty}x^ky^k(1+l^2p(x)p(y))^k\gamma(x)\gamma(y)
\\
&=\int \sum_{k=0}^{\infty}x^{k+1}y^{k+1}(1+l^2p(x)p(y))^k\gamma(x)\gamma(y)
\nonumber\\
&+\int (l^2p_1(x,y)+l^4p_2(x,y))
\sum_{k=0}^{\infty}x^{k+1}y^{k+1}(1+l^2p(x)p(y))^k\gamma(x)\gamma(y)
\label{e64}
\end{align}
where $p_1$ and $p_2$ are symmetric polynomials whose definitions can be seen from the expansion of terms.
The degree in $x$ of $p_1$ and $p_2$ are $D$ and $2D$ respectively.
Let $A_{321}$ and $A_{322}$ be the two integrals in \eqref{e64}.
There exists $a_0,\dots,a_{2D}>0$ such that 
\begin{align}
-\int(p_1(x,y)+l^2p_2(x,y))\gamma(x)\gamma(y)
\le \sum_{m=0}^{2D}a_m(\int x^m\gamma(x))^2
\end{align}
whenever $l<1$.
Recall the Schur product theorem about the positive semidefiniteness of the elementwise product of PSD matrices; it follows that 
\begin{align}
-A_{322}\le
l^2\int \sum_{m=0}^{2D}a_mx^my^m
\sum_{k=0}^{\infty}x^{k+1}y^{k+1}(1+l^2p(x)p(y))^k\gamma(x)\gamma(y).
\label{e66}
\end{align}
We can expand $(1+l^2p(x)p(y))^k$ in \eqref{e66};
for each $m\in\{0,\dots,2D\}$, $k\ge 0$, and $n\in\{0,\dots,k\}$,
the coefficient for $\int x^{m+k+1}y^{m+k+1}p^n(x)p^n(y)\gamma(x)\gamma(y)$ in \eqref{e66} is 
$
l^{2n+2}a_m{{k}\choose{n}}
$.
On the other hand, the coefficient for the same term in $A_{321}$ is $l^{2n}{{m+k}\choose{n}}\ge l^{2n}{k\choose{n}}$.
This shows that 
$
-A_{322}\le l^2\max\{a_1,\dots,a_{2D}\}A_{321}
$ and hence 
\begin{align}
A_{32}\ge \frac1{2} A_{321}
\ge
\frac1{2}\int \sum_{k=0}^{\infty}x^{k+1}y^{k+1}\gamma(x)\gamma(y)
\end{align}
for $l$ sufficiently small.
Returning to \eqref{e59}, we have
\begin{align}
A_3&=A_{31}+A_{32}
\\
&\ge \frac1{2}(1-C'l^2 )\sum_{k=0}^{\infty}(\int x^k\gamma(x))^2
\\
&\ge\frac1{4}\sum_{k=0}^{\infty}(\int x^k\gamma(x))^2
\end{align}
for $l$ sufficiently small and where $C'>0$ is some constant.
And so $A_1+A_2+A_3\ge \frac1{8}\sum_{k=0}^{\infty}(\int x^k\gamma(x))^2$ for $l$ sufficiently small, establishing the desired nonnegativity of \eqref{e48}.
\end{proof}

Now we can state the main result of this section:
\begin{thm}\label{thm12}
Fix $p(\cdot)$ a polynomial satisfying $p(1)=0$.
Consider $\Pi$ in Example~\ref{ex2}, with $f(x)=lxp(x)$.
Let $c,\beta,q$ be fixed, and consider the optimization of \eqref{e42} over $P_0$ and $P_1$ in \eqref{e44}.
For sufficiently small $l>0$, the optimal value can be achieved by $(P_0,P_1)$ of the form
\begin{align}
P_0&=a_1\delta_{b_0}
+a_2\delta_{b_2}
+a_3\delta_{b_4};
\label{e_76}
\\
P_1&=a_1\delta_{b_1}
+a_2\delta_{b_3}
+a_3\delta_{b_5},
\label{e_77}
\end{align}
where $b_0,\dots,b_5\in[0,1]$, and $(a_1,a_2,a_3)$ is on the probability simplex.
\end{thm}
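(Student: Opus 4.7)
The plan is to extend the extreme-point reasoning of Theorem~\ref{thm7} by freezing additional moment functionals, leveraging the concavity supplied by Lemma~\ref{lem9}, and then performing a secondary alignment of the two weight vectors.

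\textbf{Step 1 (slicing by moments).} Starting from the two-mixture form $P_{SR_2}=qP_1\otimes P_1+\bar{q}P_0\otimes P_0$ provided by Theorem~\ref{thm7}, I freeze the auxiliary moments $\alpha_i:=\mathbb{E}_{P_i}[S]$ and $\gamma_i:=\mathbb{E}_{P_i}[f(\bar{S})]$ for $i\in\{0,1\}$, subject to $q\alpha_1+\bar{q}\alpha_0=c$. The outer optimization ranges over these moment parameters; the inner problem is to minimize \eqref{e42} over $(P_0,P_1)$ satisfying these linear equalities plus the two normalizations.

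\textbf{Step 2 (concavity of the inner objective).} On the convex set $\mathcal{M}_0\times\mathcal{M}_1$ of moment-constrained pairs, all three summands in \eqref{e42} are concave in $(P_0,P_1)$. Indeed, $\mathbb{E}[h(\bar{S}\bar{T})]$ depends on $(P_0,P_1)$ only through $P_S=qP_1+\bar{q}P_0$, whose mean $c$ is fixed; the concavity result already invoked in the proof of Theorem~\ref{thm7} then yields concavity in $P_S$, and hence in $(P_0,P_1)$ by composition with the linear map $(P_0,P_1)\mapsto P_S$. Second, $\mathbb{E}[h(\Pi_{S,R_2}(0,0))]=q\iint h(\Pi_{s,r}(0,0))\,dP_1\,dP_1+\bar{q}\iint h(\Pi_{s,r}(0,0))\,dP_0\,dP_0$, and each double integral is concave in the corresponding $P_i$ by Lemma~\ref{lem9}, precisely in the small-$l$ regime required here. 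Third, $\mathbb{E}[h(S)]$ is linear. Sums and compositions preserve concavity, so the entire inner objective is concave on $\mathcal{M}_0\times\mathcal{M}_1$.

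\textbf{Step 3 (three atoms per component).} Each $\mathcal{M}_i$ is cut out by three linear equalities (normalization plus two moment conditions), so the classical Carath\'eodory-type characterization of extreme points of moment-constrained probability measures on $[0,1]$ implies that each extreme measure is supported on at most three points. A concave function on a compact convex set attains its minimum at an extreme point, so the inner infimum is realized with both $P_0$ and $P_1$ three-atom---yielding a representation of the shape \eqref{e_76}--\eqref{e_77} but \emph{with possibly different weight vectors} for $i=0$ and $i=1$.

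\textbf{Step 4 (aligning the weight vectors), the main obstacle.} To reach the stated form with a common $(a_1,a_2,a_3)$, I would reparametrize $(q,P_0,P_1)$ via a binary $W\in\{0,1\}$ (coding the mixture component) and a label $K\in\{1,2,3\}$ (indexing the three atoms within each $P_i$), with placement $(w,k)\mapsto b_{w+2(k-1)}$, and argue that the minimizer can be chosen with $W\perp K$, i.e., $\mathbb{P}(K=k\mid W=i)=a_k$ independent of $i$. I expect this alignment to be the main difficulty: matching weight vectors is a strict restriction---counting parameters, it cuts the joint dimension by two, so it cannot follow from relabelling alone---and the cleanest route is likely an additional convexity estimate tailored to the polynomial form $f(x)=lxp(x)$ in the small-$l$ regime, in the spirit of the Taylor-expansion-plus-Schur-product argument of Lemma~\ref{lem9} but now applied on the enlarged $(W,K)$ parametrization to force independence at the optimum.
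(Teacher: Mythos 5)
Your Steps 1--3 are sound as far as they go, but the proposal does not prove the theorem: Step 4, which you yourself flag as the main difficulty, is precisely the content that distinguishes \eqref{e_76}--\eqref{e_77} from ``each $P_i$ has at most three atoms,'' and you offer only a speculative plan for it. The obstacle is largely created by your Step 1. By freezing the moments $\mathbb{E}_{P_i}[S]$ and $\mathbb{E}_{P_i}[f(\bar{S})]$ of each component separately and applying Bauer/Krein--Milman to the product $\mathcal{M}_0\times\mathcal{M}_1$, you land on pairs of ($\le 3$)-atom measures with unrelated weight vectors; and a pair of three-atom measures whose weight vectors do not coincide as multisets cannot be rewritten as a mixture of only three Dirac pairs (any such representation is a coupling of the two weight vectors, and a three-point coupling with three-atom marginals forces a permutation pattern, hence equal weights). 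So, as your parameter count already suggests, no relabelling or re-coupling can finish the argument from where Step 3 leaves you, and the proposed ``additional convexity estimate on the $(W,K)$ parametrization'' is left entirely unconstructed.

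The paper avoids the alignment problem by applying the extreme-point argument to the pair $(P_0,P_1)$ as a \emph{single} convex object, imposing only the two joint linear constraints \eqref{e75}--\eqref{e76} on the mixture (not four individual ones). The extreme points of the unconstrained set of pairs are Dirac pairs $(\delta_{b_0},\delta_{b_1})$, and, exactly as in the proof of Theorem~\ref{thm7}, intersecting with two hyperplanes makes every extreme point of the restricted set a convex combination of at most three such Dirac pairs, i.e.\ $\sum_{k=1}^3 a_k(\delta_{b_{2k-2}},\delta_{b_{2k-1}})$; the common weight vector is automatic, not a separate step, and the first term of \eqref{e42} stays concave because the joint constraint fixes the mean of $\bar{q}P_0+qP_1$. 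The paper also gives a second device which is the right substitute for your missing estimate (and which sidesteps any worry that Lemma~\ref{lem9}'s concavity is stated only under individually fixed moments): write an arbitrary candidate as $\sum_{k=1}^K a_k(\delta_{b_k},\delta_{b'_k})$ via some coupling of the weights, and for $K\ge 4$ vary $a_1,\dots,a_4$ keeping their sum and the joint mean fixed; Lemma~\ref{lem9} shows the relevant quadratic form has positive signature at most one on the resulting two-dimensional polygon, so some line direction does not increase the objective, and moving to the boundary reduces $K$ to three. Either route closes exactly the step your Step 4 leaves open.
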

\begin{proof}
Pick arbitrary $c'\in[0,c]$ and $d\in[0,1]$, and  
consider the map from $(P_0,P_1)$ to the objective function in \eqref{e42}, restricted to the $(P_0,P_1)$ satisfying
\begin{align}
q\int xP_1(dx)+\bar{q}\int xP_0(dx)&=1-c';
\label{e75}
\\
q\int f(x) P_1(dx)+\bar{q}\int f(x)P_0(dx)&=d.
\label{e76}
\end{align}
From Lemma~\ref{lem9},
we see that this is a
is a concave functional on a convex set, hence the infimum is achieved at extremal points (Krein-Milman theorem).
For the set of $(P_0,P_1)$ where $P_0$ and $P_1$ are from the set of probability measures, the extremal points are of the form $(\delta_{b_0},\delta_{b_1})$.
If we add the two linear constraints in \eqref{e75}-\eqref{e76}, the extremal points in this restricted convex set are convex combinations of three such delta measure pairs
(using an argument similar to the proof of \eqref{thm7}), which is the claim of the theorem.

An alternative argument for the last part is as follows: for any $(P_0,P_1)=\sum_{k=1}^Ka_k(\delta_{b_k},\delta_{b'_k})$, $a_k>0$, $\sum_{k}^Ka_k=1$ with $K\ge 4$, we consider variations of $a_1,\dots,a_4$ while retaining $\sum_{k=1}^4 a_k$, $\sum_{k=1}^4 a_k(\bar{q}b_k+qb'_k)$, and the values of $a_5,\dots,a_K$. This yields a 2-dimensional polygon. Lemma~\ref{lem9} implies that the quadratic form $\int h(\Pi_{s,r}(0,0))\mu(ds)\mu(dt)$ restricted to $\int \mu=0$ and $\int s\mu(ds)=0$ has positive signature at most 1, so on this 2-dimensional polygon there must be a line along which we can move $q_1,\dots,q_4$ so that $\int h(\Pi_{s,r}(0,0))\mu(ds)\mu(dt)$ is a quadratic function with nonpositive leading coefficient, where we set $\mu=\bar{q}P_0+qP_1$.
Thus we can move along this line until hitting the boundary of the polygon without increasing the objective value, hence showing $K$ can be reduced if $K\ge 4$. 
For $(P_0,P_1)$ not finitely supported, we can use approximation argument and the fact that the weak limit of $K$-supported distributions is also $K$ supported; see similar argument in \cite[Lemma~6]{alweiss2022improved}.
\end{proof}

\section{Numerical Evaluation}\label{sec_numerical}
In this section we focus on a basic instance of Example~\ref{ex2}, $f(x)=x\bar{x}$, numerically evaluate the largest $c$ for \eqref{e42} to be nonnegative,
and discuss its implication for the union-closed sets conjecture (Theorem~\ref{thm13}).

\subsection{Positive-Semidefiniteness}\label{sec_psd}
Recall that Theorem~\ref{thm12} reduces the optimization to a 9-dimensional one involving $(a_1,a_2,q,b_0,b_2,b_4,b_1,b_3,b_5)$, where $q$ denotes the weight for the $P_1$ component of the mixture. 
Theorem~\ref{thm12} 
is based on Lemma~\ref{lem9}, which states that the quadratic form 
$\mu\mapsto -\int h(\Pi_{s,r}(0,0))\mu(ds)\mu(dt)$ is positive semidefinite in the codimension-2 subspace specified by \eqref{e49}-\eqref{e50}, if $f(x)=lxp(x)$, $l$ is sufficiently small, and $p$ is a polynomial with $p(1)=0$.
For $l=1$, we verify the positive semidefiniteness of this quadratic form by numerically computing the eigenvalues of the matrix $[-h(\Pi_{s,r}(0,0))]_{s,t\in\mathcal{G}}$ on the codimension-2 subspace, where $\mathcal{G}$ is a grid on $[0,1]$ with separation 0.0004.
Our Matlab code can be found in \url{https://jingbol.web.illinois.edu/frankl3.m}.
The min eigenvalue is $-2.3685\times 10^{-14}$, which is negligible considering the numerical errors for computation of large matrices.
For comparison, the numerical precision of Matlab is $2.22\times 10^{-16}$; for the case of i.i.d.\ coupling where $\Pi_{s,t}(0.0)=\bar{s}\bar{t}$, the positive-semidefiniteness was rigorously shown in \cite{alweiss2022improved}, and the numerically evaluated minimum eigenvalue is 
$-2.4206\times10^{-14}$. 

As another approach of verifying positive semidefiniteness, 
consider $x=\bar{s}$,
$y=\bar{t}$ and $I:=\Pi_{s,t}(0,0)=xy(1+\bar{x}\bar{y})$.
For measures $\mu$ on $[0,1]$ satisfying $\int \mu(dx)=0$, $\int x\mu(dx)=0$, $\int x^2\mu(dx)=0$, we have 
\begin{align}
-\int h(I)\mu(dx)\mu(dy)
&=\int xy(1+\bar{x}\bar{y})\log\left(1-\frac{x+y-xy}{2}\right)\mu(dx)\mu(dy)
\nonumber\\
&+\int (1-I)\log(1-I)\mu(dx)\mu(dy)
\label{e80}
\end{align}
where we used $\int xy(1+\bar{x}\bar{y})\log(2xy)\mu(dx)\mu(dy)=0$ which follows from the assumptions on $\mu$.
Now if the integrand in \eqref{e80} can be expanded as $\sum_{m,n\in\{0,1,\dots\}}O_{mn}x^my^n$ then the required condition is that  $[O_{mn}]_{m,n\ge 3}$ is a positive semidefinite matrix.
We can calculate that the coefficient of $x^my^n$ in $\log\left(1-\frac{x+y-xy}{2}\right)$ is 
\begin{align}
-\sum_{k=m\vee n}^{m+n}
\frac1{k2^k}(-1)^{m+n-k}{k\choose k-n}{n\choose k-m}
\end{align}
and in $\log(1-I)$ is 
\begin{align}
(-1)^{m+n+1}\sum_{k=\lceil \frac{m\vee n}{2}\rceil}^{m\wedge n}\frac1{k}
\sum_{d=(m+n-3k)\vee 0}^{m\wedge n-k}
{k\choose d+3k-m-n}
{m+n-d-2k\choose m-k-d}
{n-k\choose d},
\end{align}
from which we easily obtain the expression of $O$. 
We numerically verified that $[O]_{2\le m,n\le L}$ is positive semidefinite for $L=29$.
For $L\ge 30$, the combinatorial numbers are not computed exactly (keeping only 15 digits), but we checked the numerical value of the minimum eigenvalue for up to $L=90$, all confirming the positive semidefinite hypothesis.
(Code can be found in \url{https://jingbol.web.illinois.edu/frankl7.m})

\subsection{9-Dimensional Optimization}\label{sec_9dim}
Under the positive semidefiniteness hypothesis (Section~\ref{sec_psd}), it is sufficient to consider distribution of the form \eqref{e_76}-\eqref{e_77}.
We can certify that $c>0$ is a lower bound for the union-closed sets conjecture if for some $\beta\in(0,1)$ the following 9-dimensional optimization has optimal value no less than 1:
\begin{align}
\textrm{minimize }\quad 
\frac{\bar{\beta}\mathbb{E}_{(\bar{q}P_0+q P_1)^{\otimes 2}}[h(XY)]
+
\beta\mathbb{E}_{\bar{q}P_0^{\otimes 2}+qP_1^{\otimes 2}}[h(XY+XY\bar{X}\bar{Y})]}
{\mathbb{E}_{\bar{q}P_0+qP_1}[h(X)]}
\end{align}
\begin{align}
\textrm{ subject to: }\quad
\bar{q}(a_1b_0+a_2b_2+a_3b_4)
+q(a_1b_1+a_2b_3+a_3b_5)&\ge 1-c;
\\
0\le a_1,a_2,q,b_0,b_1,\dots,b_5&\le 1;
\\
a_1+a_2&\le 1.
\end{align}
We used various algorithms (interior-point, sqp, active-set) from Matlab optimization package to solve constrained optimization with about $10^5$ different random initializations (code can be found in \url{https://jingbol.web.illinois.edu/frankl5.m}).
The percentage of different local optimizers found changes with the choice of solvers.
One common local minimum found was simply a one-point mass,
but the best local minimum found, which we conjecture to be global, is the following (up to symmetries):
\begin{align}
q&=0;\\
P_0&=p^*\delta_{x^*}+\bar{p}^*\delta_0,
\end{align}
where $p^*$ and $x^*$ are defined by 
\begin{align}
x^{*2}+x^{*2}(1+\bar{x}^{*2})&=1;
\\
p^{*2}h(x^{2*})-p^*h(x^*)&=0.
\end{align}
Define
$
c':=1-p^*x^*
$.
The optimal value of $\beta$ should satisfy
\begin{align}
d(\bar{\beta}p^2h(x^2)
+\beta h(p^2(1+\bar{p}^2))
-ph(x))|_{p=p^*,x=x^*}&=0;
\\
d(px)|_{p=p^*,x=x^*}&=0,
\end{align}
where the differentials are in $x$ and $p$, and we can solve $\frac{dp}{dx}$ and $\beta$ from the two equations to obtain expression of the optimal $\beta^*$ in terms of $x^*$ and $p^*$ (omitted here).
The numerical values are 
\begin{align}
p^*&\approx 0.893604513905457;\\
x^*&\approx 0.690787593924988;\\
c'&\approx 0.382709087918741;\label{e92}\\
\beta^*&\approx 0.100052559862974.
\end{align}
Our conclusion is the following:
\begin{thm}\label{thm13}
Under the positive-semidefiniteness hypothesis in Section~\ref{sec_psd} and the hypothesis of the global minimizer structure in Section~\ref{sec_9dim}, the constant in the union-closed sets conjecture can be improved to $c'$ in \eqref{e92}. 
\end{thm}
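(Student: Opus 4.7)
The plan is to apply Proposition~\ref{prop3} to a convex combination of two protocols: Gilmer's i.i.d.\ coupling $\Pi^{(1)}$ and the conditionally IID protocol $\Pi^{(3)}$ from Example~\ref{ex2} with $f(x)=x\bar{x}$, with mixing weight $\beta\in(0,1)$ on $\Pi^{(3)}$. In view of Proposition~\ref{prop3}, it suffices to exhibit some $\beta\in(0,1)$ for which \eqref{e16} holds with constant $C\ge 1$ at $c=c'$, and I will show in fact that the ratio in \eqref{e42} is nonnegative there.

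First I would feed Theorem~\ref{thm7} into the optimization \eqref{e42}: the achievable infimum is attained at a two-mixture $P_{SR_2}=qP_1^{\otimes 2}+\bar{q}P_0^{\otimes 2}$. Next I would invoke Theorem~\ref{thm12}, with $f(x)=x\bar{x}$ in place of the family $lxp(x)$ considered there. Strictly speaking, Theorem~\ref{thm12} as stated applies only for sufficiently small $l$, so the step that actually earns the reduction at $l=1$ is the positive-semidefiniteness hypothesis of Section~\ref{sec_psd}, which plays exactly the role that Lemma~\ref{lem9} plays in the proof of Theorem~\ref{thm12}: substituting this hypothesis for Lemma~\ref{lem9} in the Krein--Milman/extremal-points argument reduces the problem to the 9-parameter family $(a_1,a_2,q,b_0,\dots,b_5)$ in \eqref{e_76}--\eqref{e_77}.

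Under the global-minimizer hypothesis of Section~\ref{sec_9dim}, the minimizer collapses to $q=0$ and $P_0=p^*\delta_{x^*}+\bar{p}^*\delta_0$, so the problem reduces to a two-variable optimization over $(p,x)\in[0,1]^2$ subject to the mean constraint $px\ge 1-c$. At this optimum I would write the Lagrangian
\begin{equation*}
L(p,x,\beta,\lambda)=\bar{\beta}\,p^2h(x^2)+\beta\,h\bigl(p^2(1+\bar{p}^2)\bigr)-p\,h(x)-\lambda(1-px-c),
\end{equation*}
noting that the contributions of the point mass at $0$ vanish (since $h(0)=0$ and it contributes $0$ to the mean of $X=\bar{S}$). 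The stationarity conditions $\partial_p L=\partial_x L=0$, together with binding mean constraint $px=1-c$ and the zero-ratio condition on the numerator, constitute four equations in the four unknowns $(p^*,x^*,\beta^*,c')$. Two of these simplify to the clean algebraic identities $x^{*2}+x^{*2}(1+\bar{x}^{*2})=1$ and $p^{*2}h(x^{*2})-p^*h(x^*)=0$ stated in Section~\ref{sec_9dim}; the other two, obtained by eliminating $dp/dx$ from the total differentials of the objective and constraint, pin down $\beta^*$ explicitly in terms of $p^*$ and $x^*$. Solving numerically yields the values claimed, in particular $c'\approx 0.382709087918741$.

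The main obstacle is the global-minimizer hypothesis itself: the Krein--Milman reduction only guarantees that an optimal two-mixture exists, not that the specific three-atom-per-component minimizer found by Matlab is global. The argument above is therefore conditional on this hypothesis, exactly as the theorem statement allows, and on the positive-semidefiniteness hypothesis which is verified numerically in Section~\ref{sec_psd} to very high precision but not proved. Once both hypotheses are granted, the remaining work is essentially calculus---verifying that the Lagrangian system admits the claimed solution and that this solution gives objective value $1$---so the conclusion $c'$ as a lower bound for the union-closed sets conjecture follows from Proposition~\ref{prop3}.
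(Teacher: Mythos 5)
Your proposal is correct and follows essentially the same route as the paper, whose ``proof'' of Theorem~\ref{thm13} is just the content of Section~\ref{sec_numerical}: Proposition~\ref{prop3} applied to the $\bar\beta$/$\beta$ mixture of Gilmer's i.i.d.\ protocol and the Example~\ref{ex2} protocol with $f(x)=x\bar{x}$, the reduction via Theorem~\ref{thm7} and the Theorem~\ref{thm12} argument with the Section~\ref{sec_psd} positive-semidefiniteness hypothesis standing in for Lemma~\ref{lem9} at $l=1$, and then the numerically conjectured minimizer structure yielding the stationarity equations that define $p^*,x^*,\beta^*,c'$. The only cosmetic point is that you state the sufficiency condition as ``$C\ge 1$ at $c=c'$'' whereas Proposition~\ref{prop3} needs $C>1$ (i.e.\ strict inequality for each $c<c'$), but the paper glosses this in exactly the same way.
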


\bibliographystyle{alpha}
\bibliography{ref}
\end{document}